\newtheorem{problem}{Problem}
\newtheorem{lemma}{Lemma}
\newtheorem{theorem}{Theorem}
\newtheorem{corollary}{Corollary}
\newtheorem{defn}{Definition}
\newtheorem{fact}{Fact}
\newcommand{\M}{\texttt{M}}
\newcommand{\ICM}[1]{\texttt{ICM}$_{#1}$}
\newcommand{\LTM}[1]{\texttt{LTM}$_{#1}$}
\newcommand{\maxleaf}{\textrm{M{\scriptsize AX}\-L{\scriptsize EAF}}}
\newcommand{\strategyml}{\textrm{S{\scriptsize TRATEGY}\-M{\scriptsize AX}\-L{\scriptsize EAF}}}
\newcommand{\LS}{\textrm{L{\scriptsize OCAL}\-S{\scriptsize EARCH}}}
\newcommand{\rand}{\textrm{R{\scriptsize ANDOM}\-P{\scriptsize RICING}}}
\newcommand{\footnoteremember}[2]{
  \footnote{#2}
  \newcounter{#1}
  \setcounter{#1}{\value{footnote}}
}
\newcommand{\footnoterecall}[1]{
  \footnotemark[\value{#1}]
}
\begin{document}

\title{Pricing strategies for viral marketing on Social Networks}

\author{
David Arthur\footnoteremember{cs}{Department of Computer Science, Stanford
  University} \qquad
Rajeev Motwani\footnoterecall{cs} \qquad
Aneesh Sharma\footnote{Institute for Computational and Mathematical
  Engineering, Stanford University} \qquad
Ying Xu\footnote{Work done when author was a student at the Department of
  Computer Science, Stanford University} \\
{\tt \{darthur,rajeev,aneeshs,xuying\}@cs.stanford.edu}
}

\maketitle

\begin{abstract}
  We study the use of viral marketing strategies on social networks to
  maximize revenue from the sale of a single product. We propose a model in
  which the decision of a buyer to buy the product is influenced by friends
  that own the product and the price at which the product is offered. The
  influence model we analyze is quite general, naturally extending both the
  Linear Threshold model and the Independent Cascade model, while also
  incorporating price information.  We consider sales proceeding in a
  cascading manner through the network, i.e. a buyer is offered the product
  via recommendations from its neighbors who own the product. In this
  setting, the seller influences events by offering a cashback to
  recommenders and by setting prices (via coupons or discounts) for each
  buyer in the social network.

  Finding a seller strategy which maximizes the expected revenue in this
  setting turns out to be NP-hard. However, we propose a seller strategy that
  generates revenue guaranteed to be within a constant factor of the optimal
  strategy in a wide variety of models. The strategy is based on an {\em
    influence-and-exploit} idea, and it consists of finding the right
  trade-off at each time step between: generating revenue from the current
  user versus offering the product for free and using the influence generated
  from this sale later in the process.  We also show how local search can be
  used to improve the performance of this technique in practice.
\end{abstract}

\section{Introduction}
Social networks such as Facebook, Orkut and MySpace are free to join, and
they attract vast numbers of users. Maintaining these websites
for such a large group of users requires substantial investment from the host
companies.  To help recoup these investments, these companies often turn to
monetizing the information that their users provide for free on these
websites. This information includes both detailed profiles of users and also
the network of social connections between the users.  Not surprisingly, there
is a widespread belief that this information could be a gold mine for
targeted advertising and other online businesses.  Nonetheless, much of this
potential still remains untapped today. Facebook, for example, was valued at
\$15 billion by Microsoft in 2007~\cite{BBC07}, but its estimated revenue in
2008 was only \$300 million~\cite{Wikipedia:Facebook}. With so many users and so
much data, higher profits seem like they should be possible.  Facebook's
Beacon advertising system does attempt to provide targeted advertisements but
it has only obtained limited success due to privacy
concerns~\cite{Wikipedia:Beacon}.

This raises the question of how companies can better monetize the already public
data on social networks without requiring extra information and thereby compromising
privacy. In particular, most large-scale monetization technologies
currently used on social networks are modeled on the sponsored search paradigm of
contextual advertising and do not effectively leverage the networked nature of the data.

Recently, however, people have begun to consider a different monetization
approach that is based on selling products through the spread of
influence. Often, users can be convinced to purchase a product if many of
their friends are already using it, even if these same users would be hard to
convince through direct advertising. This is often a result of personal
recommendations -- a friend's opinion can carry far more weight than an
impersonal advertisement. In some cases, however, adoption among friends is
important for even more practical reasons. For example, instant messenger
users and cell phone users will want a product that allows them to talk
easily and cheaply with their friends.  Usually, this encourages them to
adopt the same instant messenger program and the same cell phone carrier that
their friends have.  We refer the reader to previous work and the references
therein for further explanations behind the motivation of the influence
model~\cite{Kleinberg_AGT07,HMS08}.

In fact, many sellers already do try to utilize influence-and-exploit strategies that are based on
these tendencies. In the advertising world, this has recently led to the adoption
of {\em viral marketing}, where a seller attempts to artificially create word-of-mouth
advertising among potential customers~\cite{LSK06,LAH07,Schon08}. A more powerful but riskier
technique has been in use much longer: the seller gives out free samples
or coupons to a limited set of people, hoping to convince these people to try out the product and then
recommend it to their friends. Without any extra data, however, this forces sellers
to make some very difficult decisions. Who do they give the free samples to? How many free
samples do they need to give out? What incentives can they afford to give to recommenders
without jeopardizing the overall profit too much?

In this paper, we are interested in finding systematic answers to these questions. In general
terms, we can model the spread of a product as a process on a social network. Each node
represents a single person, and each edge represents a friendship. Initially, one or more
nodes is ``active'', meaning that person already has the product. This could either be a large set of
nodes representing an established customer base, or it could be just one node -- the seller --
whose neighbors consist of people who independently trust the seller, or who are otherwise
likely to be interested in early adoption.

At this point, the seller can encourage the spread of influences in two ways. First of all, it can offer
cashback rewards to individuals who recommend the product to their friends. This is often seen in practice
with ``referral bonuses'' -- each buyer can optionally name the person who referred them, and this person
then receives a cash reward. This gives existing buyers an incentive to recommend the product to
their friends. Secondly, a seller can offer discounts to specific people in order to encourage them to
buy the product, above and beyond any recommendations they receive. It is important to choose a good discount from
the beginning here. If the price is not acceptable when a prospective buyer first receives recommendations,
they might not bother to reconsider even if the price is lowered later.

After receiving discount offers and some set of recommendations, it is up to
the prospective buyers to decide whether to actually go through with a
purchase. In general, they will do so with some probability that is
influenced by the discount and by the set of recommendations they have
received. The form of this probability is a parameter of the model and it is
determined by external factors, for instance, the quality of the product and
various exogenous market conditions. While it is impossible for a seller to
calculate the form of these probability exactly, they can estimate it from
empirical observations, and use that estimate to inform their policies. One
could interpret the probabilities according to a number of different models
that have been proposed in the literature (for instance, the Independent
Cascade and Linear Threshold models), and hence it is desirable for the
seller to be able to come up with a strategy that is applicable to a wide
variety of models.

Now let us suppose that a seller has access to data from a social network such as
Facebook, Orkut, or MySpace. Using this, the seller can estimate what the real, true,
underlying friendship structure is, and while this estimate will not be perfect, it is
getting better over time, and any information is better than none. With this information
in hand, a seller can model the spread of influence quite accurately, and the formerly
inscrutable problems of who to offer discounts to, and at what price, become algorithmic
questions that one can legitimately hope to solve. For example, if a seller knows the
structure of the network, she can locate individuals that are particularly well connected
and do everything possible to ensure they adopt the product and exert their considerable influence.

In this paper, we are interested in the algorithmic side of this question: Given the network
structure and a model of the purchase probabilities, how should
the seller decide to offer discounts and cashback rewards?

\subsection{Our contributions}
We investigate seller strategies that address the above questions
in the context of expected revenue maximization. We will focus much of our
attention on {\em non-adaptive} strategies for the seller: the seller chooses and
commits to a discount coupon and cashback offer for each potential buyer before the
cascade starts. If a recommendation is given to this node at any time,
the price offered will be the one that the seller committed to initially, irrespective
of the current state of the cascade.

A wider class of strategies that one could consider are
{\em adaptive strategies}, which do not have this restriction. For example, in an
adaptive strategy, the seller could choose to observe the outcome of the (random)
cascading process up until the last minute before making very well informed
pricing decisions for each node. One might imagine that this additional flexibility
could allow for potentially large improvements over non-adaptive strategies.
Unfortunately, there is a price to be paid, in that good adaptive strategies are likely
to be very complicated, and thus difficult and expensive to implement. The ratio of
the revenue generated from the optimal adaptive strategy to the revenue generated
from the optimal non-adaptive strategy is termed the ``adaptivity gap''.

Our main theoretical contribution is a very efficient non-adaptive strategy whose expected revenue is
within a constant factor of the optimal revenue from an {\em adaptive} strategy. This
guarantee holds for a wide variety of probability functions, including natural extensions
of both the Linear Threshold and Independent Cascade models\footnote{More precisely, the
strategy achieves a constant-factor approximation for any {\em fixed} model, independent
of the social network. If one changes the model, the approximation factor does vary, as
made precise in Section~\ref{sec:approxalg}.}. Note that a surprising consequence of this result
is that the adaptivity gap is constant, so one can make the case that not much is lost by
restricting our attention to non-adaptive policies. We also show that the problem of
finding an optimal non-adaptive strategy is NP-hard, which means an efficient
approximation algorithm is the best theoretical result that one could hope for.

Intuitively, the seller strategy we propose is based on an {\em influence-and-exploit} idea,
and it consists of categorizing each potential buyer as either an {\em influencer} or a
{\em revenue source}. The influencers are offered the product for free and the revenue
sources are offered the product at a pre-determined price, chosen based on the exact
probability model. Briefly, the categorization is done by finding a spanning tree of
the social network with as many leaves as possible, and then marking the leaves as
revenue sources and the internal nodes as influencers. We can find such a tree
in near-linear time~\cite{KW91,LR98}. Cashback amounts are chosen to be a fixed fraction of
the total revenue expected from this process. The full details are presented in
section~\ref{sec:approxalg}.

In practice, we propose using this approach to find a strategy that has good
global properties, and then using local search to improve it further. This kind of combination has
been effective in the past, for example on the k-means problem \cite{AV07}. Indeed, experiments
(see section~\ref{sec:localsearch}) show that combining local search with the above influence-and-exploit
strategy is more effective than using either approach on its own.

\subsection{Related work}
The problem of {\em social contagion} or spread of influence was first
formulated by the sociological community, and introduced to the computer
science community by Domingos and Richardson~\cite{DR01}. An influential
paper by Kempe, Kleinberg and Tardos~\cite{KKT03} solved the {\em target set
  selection} problem posed by~\cite{DR01} and sparked interest in this area
from a theoretical perspective (see~\cite{Kleinberg_AGT07}). This work has
mostly been limited to the {\em influence maximization} paradigm, where
influence has been taken to be a proxy for the revenue generated through a
sale. Although similar to our work in spirit, there is no notion of price in this
model, and therefore, our central problem of setting prices to encourage influence
spread requires a more complicated model.

A recent work by Hartline, Mirrokni and Sundararajan~\cite{HMS08} is similar
in flavor to our work, and also considers extending social contagion ideas
with pricing information, but the model they examine differs from our model
in a several aspects. The main difference is that they assume that the seller
is allowed to approach arbitrary nodes in the network at any time and offer
their product at a price chosen by the seller, while in our model the cascade
of recommendations determines the timing of an offer and this cannot be
directly manipulated. In essence, the model proposed in~\cite{HMS08} is akin
to advertising the product to arbitrary nodes, bypassing the network
structure to encourage a desired set of early adopters. Our model restricts
such direct advertising as it is likely to be much less effective than a
direct recommendation from a friend, especially when the recommender has an
incentive to convince the potential buyer to purchase the product (for
instance, the recommender might personalize the recommendation, increasing
its effectiveness). Despite the different models, the algorithms proposed by
us and~\cite{HMS08} are similar in spirit and are based on an {\em
  influence-and-exploit} strategy.

This work has also been inspired by a direction mentioned by
Kleinberg~\cite{Kleinberg_AGT07}, and is our interpretation of the informal
problem posed there. Finally, we point out that the idea of cashbacks has
been implemented in practice, and new retailers are also embracing the
idea~\cite{LSK06,LAH07,Schon08}. We note that some of the systems being
implemented by retailers are quite close to the model that we propose, and
hence this problem is relevant in practice.

\section{The Formal Model}
\label{sec:model}
Let us start by formalizing the setting stated above. We represent
the social network as an undirected graph $G(V,E)$, and denote the
initial set of adopters by $S^0\subseteq V$. We also denote the
{\em active} set at time $t$ by $S^{t-1}$ (we call a node {\em
active} if it has purchased the product and {\em inactive}
otherwise). Given this setting, the recommendations cascade through
the network as follows: at each time step $t\geq 1$, the nodes that
became active at time $t-1$ (i.e. $S^0$ for $t=1$, and
$u\in S^{t-1}\setminus S^{t-2}$ for $t\geq 2$) send recommendations
to their currently inactive friends in the network: $N^{t-1}=\{v\in V\setminus
S^{t-1}|(u,v)\in E, u\in S^{t-1}\setminus
S^{t-2}\}$. Each such node $v \in N^{t-1}$
is also given a price $c_{v,t}\in\mathbb{R}$ at which it can purchase the
product. This price is chosen by the seller to either be full price
or some discounted fraction thereof.

The node $v$ must then decide whether to purchase the product or not
(we discuss this aspect in the next section). If $v$ does accept the
offer, a fixed cashback $r>0$ is given to a recommender
$u\in S^{t-1}$ (note that we are fixing the cashback to be a positive constant for all
the nodes as the nodes are assumed to be non-strategic and any positive
cashback provides incentive for them to provide recommendations). If there are multiple recommenders, the buyer must
choose only one of them to receive the cashback; this is a system that is
quite standard in practice. In this way, offers are made to all nodes $v\in N^{t-1}$ through
the recommendations at time $t$ and these nodes make a decision at
the end of this time period. The set of active nodes is then updated
and the same process is repeated until the process quiesces, which
it must do in finite time since any step with no purchases ends the
process.

In the model described above, the only degree of freedom that the
seller has is in choosing the prices and the cashback amounts. It wants
to do this in a way that maximizes its own expected revenue
(the expectation is over randomness in the buyer strategies). Since
the seller may not have any control over the seed set, we are
looking for a strategy that can maximize the expected revenue starting
from any seed set on any graph. In most
online scenarios, producing extra copies of the product has negligible
cost, so maximizing expected revenue will also maximize expected profit.

Now we can formally state the problem of finding a revenue
maximizing strategy as follows:
\begin{problem} \label{problem:strategy}
  Given a connected undirected graph $G(V,E)$, a seed set $S^0$, a
  fixed cashback amount $r$, and a model \M{} for determining when
  nodes will purchase a product, find a strategy that
  maximizes the expected revenue from the cascading process
  described above.
\end{problem}
We are particularly interested in non-adaptive policies, which
correspond to choosing a price for each node in advance, making the price
independent of the time of the recommendation and the state of the cascade at
the time of the offer. Our goal
will be threefold: (1) to show that this problem is NP-hard even for
simple models \M, (2) to construct a constant-factor approximation
algorithm for a wide variety of models, and (3) to show that
restricting to non-adaptive policies results in at most a constant
factor loss of profit.

To simplify the exposition, we will assume the cashback $r = 0$ for
now. At the end of Section 4, we will show how the results can be
generalized to work for positive $r$, which should be sufficient
incentive for buyers to pass on recommendations.

\subsection{Buyer decisions}
\label{sec:BuyerDecision}
In this section, we discuss how to model the probability that a node
will actually buy the product given a set of recommendations and a
price. We use a very general model in this work that naturally extends the most
popular traditional models proposed in the influence maximization literature, including both Independent Cascade and Linear
Threshold.

Consider an abstract model \M{} for determining the probability
that a node will buy a product given a price and what
recommendations it has received. We allow \M{} to take on virtually any
form, imposing only the following conditions:
\begin{enumerate}
    \item The seller has full information about \M. This is a standard assumption, and
    it can be approximated in practice by running experiments and
    observing people's behavior.
    \item A node will never pay more than full price for the product
    (we assume this full price is 1 without loss of generality). Without an
    assumption like this, the seller could potentially achieve
    unbounded revenue on a single network, which makes the
    problem degenerate.
    \item A node will always accept the product and recommend it to
    friends if it receives a recommendation with price 0 (i.e. if a friend
    offers the product for free). Since nodes are given positive cash rewards
    for making recommendations, this condition is true for any rational buyer.
    \item If the social network is a single line graph with $S^0$ being the
    two endpoints, the maximum expected revenue is at most a constant $L$.
    Intuitively, this states that each prospective buyer on a social network should have
    some chance of rejecting the product (unless it's given to them for free),
    and therefore the maximum revenue on a line is bounded by a geometric series, and
    is therefore constant.
    \item There exist constants $f$, $c$, $q$ so that if more than
    fraction $f$ of a given node's neighbors recommend the product to the
    node at cost $c$, the node will purchase the product with
    probability $q$. This rules out extreme inertia, for example the case where no buyer will consider
    purchasing a product unless almost all of its neighbors have already done so.
\end{enumerate}
The fourth and fifth conditions here are used to parametrize how complicated
the model is, and our final approximation bound will be in terms of this model
``complexity'', which is defined to be $\frac{L}{(1-f)cq}$. While it may not
be obvious that all these conditions are met in general, we will show that they
are for both the Independent Cascade and Linear Threshold models, and indeed,
the arguments there extend naturally to many other cases as well.

In the traditional Independent Cascade model, there is a fixed
probability $p$ that a node will purchase a product each time it is
recommended to them. These decisions are made independently for each
recommendation, but each node will buy the product at most once.

To generalize this to multiple prices, it is natural to make $p$ a
function $[0,1]\rightarrow[0,1]$ where $p(x)$ represents the
probability that a node will buy the product at price $x$. For
technical reasons, however, it is convenient to work with the
inverse of $p$, which we call $C$.\footnote{It is sometimes useful
to consider functions $p(\cdot)$ that are not one-to-one. These functions have
no formal inverse, but in this case, $c$ can still be formally defined as
$C(x) = \max_y | p(y) | \ge x$.} Our general conditions on the
model reduce to setting $C(0) = 1$ and $C(1) = 0$ in this case. To ensure
bounded complexity, we also impose a minor smoothness condition.
\begin{defn}
    Fix a cost function $C: [0,1]\rightarrow[0,1]$ with
    $C(0) = 1, C(1) = 0$ and with $C$ differentiable at 0 and 1.
    We define the {\em Independent Cascade Model} \ICM{c} as follows:\\
    Every time a node receives a recommendation at price $C(x)$, it
    buys the product with probability $x$ and does nothing
    otherwise. If a node receives multiple recommendations, it
    performs this check independently for each recommendation but
    it never purchases the product more than once.
\end{defn}

\begin{lemma} \label{ICMLem}
    Fix a cost function $C$. Then:
    \begin{enumerate}
      \item \ICM{C} has bounded (model) complexity.
      \item If $C$ has maximum slope $m$ (i.e.
        $|C(x) - C(y)| \leq m|x-y|$ for all $x,y$), then $ICM_C$ has $O(m^2)$
        complexity.
      \item If $C$ is a step function with $n$ regularly spaced
            steps (i.e. $C(x) = C(y)$ if $\lfloor \frac{x}{n}
            \rfloor = \lfloor \frac{y}{n} \rfloor$), then \ICM{C}
            has $O(n^2)$ complexity.
    \end{enumerate}
\end{lemma}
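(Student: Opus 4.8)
The plan is to reduce \textbf{bounded complexity} to the two sub-tasks dictated by the definition of model complexity $\frac{L}{(1-f)cq}$: bounding the line-revenue constant $L$ (condition~4), and exhibiting admissible constants $f,c,q$ with $(1-f)cq>0$ (condition~5). The engine for all three parts is a single estimate for $L$. On a semi-infinite line seeded at one end, a non-adaptive strategy assigns each node $i$ a purchase probability $x_i$ at price $C(x_i)$, so node $i$ is reached and buys with probability $P_i=\prod_{j\le i}x_j$ and the expected revenue is $R=\sum_i P_i C(x_i)=\sum_i P_{i-1}\,x_iC(x_i)$. I would set $\Phi:=\sup_{x\in[0,1)}\frac{xC(x)}{1-x}$, observe the pointwise inequality $x_iC(x_i)\le \Phi(1-x_i)$, and telescope the survival probabilities via $\sum_i P_{i-1}(1-x_i)=\sum_i(P_{i-1}-P_i)=1$ to conclude $R\le\Phi$. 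For the two-ended line of condition~4, each node's purchase probability is at most the sum of its probabilities of being reached from the left and from the right (and activation from one side can only lower the reach from the other), so a union bound gives $L\le 2\Phi$. Everything thus reduces to estimating $\Phi$ and selecting $(f,c,q)$.

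For part~1, differentiability of $C$ at $1$ (with $C(1)=0$) makes $\frac{xC(x)}{1-x}\to|C'(1)|$ as $x\to1^-$, while on any $[0,1-\delta]$ the ratio is at most $\delta^{-1}$; hence $\Phi<\infty$ and $L$ is bounded. Since $C(0)=1$ and $C$ is continuous at $0$, I can pick $x_0>0$ with $C(x_0)\ge \tfrac12$ and take $(f,c,q)=(0,\,C(x_0),\,x_0)$: a single recommendation at price $C(x_0)$ triggers a purchase with probability $x_0$, which verifies condition~5, and $(1-f)cq=x_0C(x_0)>0$, so the complexity is finite. For part~2, $m$-Lipschitzness together with $C(1)=0$ gives the global bound $C(x)\le m(1-x)$, whence $\frac{xC(x)}{1-x}\le mx\le m$ and $\Phi\le m$ (note $m\ge1$ since $C(0)=1$); dually $C(x)\ge 1-mx$, so choosing $x_0=\tfrac1{2m}$ yields $c=C(x_0)\ge\tfrac12$ and $q=\tfrac1{2m}$. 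The complexity is then at most $\frac{2m}{(1/2)(1/2m)}=O(m^2)$.

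Part~3 is the delicate case, because a step function is not differentiable and the naive expression for $\Phi$ threatens to diverge. The key observation I would establish is that for an $n$-step $C$ the purchase probability is strictly capped below $1$ at every positive price: the achievable (probability, price) pairs are $(k/n,c_k)$ for $k=0,\dots,n-1$ together with $(1,0)$, so probability $1$ requires price $0$. Consequently $\Phi=\max_{k}\frac{(k/n)c_k}{1-k/n}=\max_k\frac{kc_k}{n-k}\le n-1$, giving $L=O(n)$. Letting $k^\dagger$ attain this maximum and taking $(f,c,q)=(0,\,c_{k^\dagger},\,k^\dagger/n)$, the ratio telescopes to $\frac{L}{(1-f)cq}\le\frac{2\,k^\dagger c_{k^\dagger}/(n-k^\dagger)}{c_{k^\dagger}k^\dagger/n}=\frac{2n}{n-k^\dagger}\le 2n$, which is $O(n)$ and in particular the claimed $O(n^2)$.

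The main obstacle I anticipate is precisely the step-function case: one must pin down the model's behaviour at prices where $C$ is flat, so that the purchase probability at any positive price is provably at most $1-1/n$ rather than approaching $1$, since otherwise the revenue on a line genuinely diverges and condition~4 fails. A secondary technical point is justifying the union bound for the two-ended line, i.e.\ that interference between the two inward cascades can only decrease each node's reach probability relative to the one-directional analysis. The telescoping identity $\sum_i P_{i-1}(1-x_i)=1$ is what lets the same argument serve uniformly across all three parts.
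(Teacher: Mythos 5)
Your treatment of parts 1 and 2 is correct and is essentially the paper's own proof: your quantity $\Phi=\sup_{0\le x<1}\frac{xC(x)}{1-x}$ and the bound $L\le 2\Phi$ are exactly what the paper establishes (it derives $L \le \max_{0<x<1}\frac{2x\,C(x)}{1-x}$ from the recursion $L_n=\max_x x(C(x)+L_{n-1})$ rather than by your telescoping identity, but the quantity is the same), and your choice $(f,c,q)=(0,\,C(x_0),\,x_0)$ with $C(x_0)\ge\tfrac12$ mirrors the paper's $f=0$, $q=\min(\epsilon,\tfrac{1}{2m})$, $c=C(q)\ge\tfrac12$.

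Part 3, however, has a genuine gap, located exactly where you flagged the delicate point. Your ``key observation'' --- that for an $n$-step cost function the achievable (probability, price) pairs are only $(k/n,c_k)$ together with $(1,0)$, so that the purchase probability at any positive price is capped at $1-1/n$ --- is not how \ICM{C} is defined. In the model, the seller may choose \emph{any} $x\in[0,1]$ and charge $C(x)$, whereupon the node buys with probability exactly $x$: probabilities are continuous, and only the \emph{prices} are quantized by the steps. Hence the seller can elicit probability $1-\delta$ at price $c_{n-1}$ for every $\delta>0$, and if $c_{n-1}>0$ then $\Phi=\infty$ and the revenue on a line genuinely diverges --- precisely the failure mode you yourself anticipated. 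What rescues part 3 is a hypothesis your argument never invokes: the definition of a cost function requires $C$ to be differentiable (in particular continuous) at $1$ with $C(1)=0$, and for a step function this forces $c_{n-1}=0$ (and likewise $c_0=1$ near $0$). It is the top-step \emph{price}, not the top-step probability, that collapses. Granting this, $\Phi\le n$ (off the top step $1-x> 1/n$, and on the top step $C\equiv 0$), and taking $q$ just below $1/n$, so that $c=C(q)=1$, gives complexity $O(n^2)$; this is exactly how the paper's single estimate --- complexity at most $8\max(\frac{1}{\epsilon},m)^2$, instantiated with $\epsilon\approx 1/n$ and $m$ bounded --- makes part 3 follow ``immediately.'' Your cancellation trick would then even sharpen the bound to $O(n)$, but as written the finiteness of $\Phi$ in part 3 rests on a reading of the model that the paper's definition does not support.
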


\begin{proof}
    We show that the complexity of \ICM{C} can be bounded in terms
    of the maximum slope of $C$ near 0 and 1. Recall that if $C$ is
    differentiable at 0, then, by definition, there exists $\epsilon
    > 0$ so that
    $\frac{|C(x) - C(0)|}{x} \le |C'(0)| + 1$ for $x < \epsilon$. A similar
    argument can be made for $x = 1$, and thus we can say formally that
    there exist $m$ and $\epsilon>0$ such that:
    \begin{eqnarray*}
        &C(x) \ge 1 - mx& \textrm{ for } x \le \epsilon, \textrm{and}\\
        &C(x) \le m(1 - x)& \textrm{ for } x \ge 1-\epsilon.
    \end{eqnarray*}
    In this case, we will show that \ICM{C} has complexity at most
    $8\max(\frac{1}{\epsilon}, m)^2$, proving part 1. Note that parts 2 and 3 of the lemma will also follow
    immediately.

    We begin by analyzing $L_n$, the maximum expected revenue that
    can be achieved on a path of length $n$ if {\em one} of the
    endpoints is a seed. Note that $L \le 2\max_n L_n$ since selling a
    product on a line graph with two seeds can be thought of as
    two independent sales, each with one seed, that are cut short if
    the sales ever meet. Now we have:
    \begin{eqnarray*}
        L_n = \max_x x(C(x) + L_{n-1}).
    \end{eqnarray*}
    This is because offering the product at cost $C(x)$ will lead to
    a purchase with probability $x$, and in that case, we get $C(x)$
    revenue immediately and $L_{n-1}$ expected revenue in the future.
    Since $L_n$ is obviously increasing in $n$,
    this can be simplified further:
    \begin{eqnarray*}
        && L_n \le \max_x x(C(x) + L_n)\\
        &\implies&
        L \le 2L_n \le \max_{0<x<1} \frac{2x \cdot C(x)}{1-x}
    \end{eqnarray*}
    For $x \ge 1-\epsilon$, we have $\frac{2x \cdot
    C(x)}{1-x} \le \frac{2x \cdot m(1-x)}{1-x} \le 2m$, and for
  $x < 1 - \epsilon$, we have $\frac{2x \cdot C(x)}{1-x} \le
    \frac{2}{\epsilon}$. Either way,
    $L \le 2\max(\frac{1}{\epsilon}, m)$.

    It remains to choose $f, c$ and $q$ as per the first complexity
    condition. We use $f = 0$, $q = \min(\epsilon, \frac{1}{2m})$
    and $c = C(q) \ge \frac{1}{2}$. Indeed, if a node has more than
    0 active neighbors, it will accept a recommendation at cost
    $C(q)$ with probability $q$.

    Thus \ICM{c} has complexity at most $\frac{L}{(1-f)cq} \le
    8\max(\frac{1}{\epsilon}, m)^2$, as required.
\end{proof}

In the traditional Linear Threshold model, there are fixed
influences $b_{v,w}$ on each directed edge $(v,w)$ in the
network. Each node independently chooses a threshold $\theta$
uniformly at random from $[0,1]$, and then purchases the product
if and when the total influence on it from nodes that have
recommended the product exceeds $\theta$.

To generalize this to multiple prices, it is natural to make
$b_{v,w}$ a function $[0,1]\rightarrow[0,1]$ where $b_{v,w}(x)$
indicates the influence $v$ exerts on $w$ as a result of
recommending the product at price $x$. To simplify the exposition,
we will focus on the case where a node is equally influenced by all
its neighbors. (This is not strictly necessary but removing this
assumptions requires rephrasing the definition of $f$ to be a
{\em weighted} fraction of a node's neighbors.) Finally, we assume
for all $v,w$ that $b_{v,w}(0) = 1$ to satisfy the second general
condition for models.

\begin{defn}
    Fix a max influence function $B: (0,1]\rightarrow[0,1]$, not
    uniformly 0. We define the {\em Linear Threshold Model} \LTM{B}
    as follows:\\
    Every node independently chose a threshold $\theta$ uniformly at
    random from $[0,1]$. A node will buy the product at price $x >
    0$ only if
    $B(x) \ge \frac{\alpha}{\theta}$ where $\alpha$ denotes the
    fraction of the node's neighbors that have recommended the
    product. A node will always accept a recommendation if the
    product is offered for free.
\end{defn}

\begin{lemma}
    Fix a max influence function $B$ and let
    $K = \max_x x \cdot B(x)$. Then \LTM{B} has complexity $O(\frac{1}{K})$.
\end{lemma}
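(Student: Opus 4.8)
The plan is to verify the two quantities appearing in the complexity $\frac{L}{(1-f)cq}$ separately: bound the line-graph revenue $L$ by a universal constant (independent of $B$), and then extract the factor $1/K$ entirely from a careful choice of the parameters $f$, $c$, $q$ concentrated near the maximizer of $x\cdot B(x)$.

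First I would bound $L$. Exactly as in the proof of Lemma~\ref{ICMLem}, I reduce the two-seed line to a one-seed line via $L \le 2\max_n L_n$, where $L_n$ is the optimal revenue on a path of length $n$ with one seeded endpoint. Along such a path, every node that receives a recommendation sees exactly a fraction $\alpha = \tfrac12$ of its neighbors active, so reading off the \LTM{B} purchase rule (with $\theta$ uniform on $[0,1]$) it buys at price $x$ with probability $p(x) = \min(1, \tfrac12 B(x)) = \tfrac12 B(x) \le \tfrac12$, using $B(x)\le 1$. The same one-step argument as before gives $L_n = \max_x p(x)(x + L_{n-1})$; since $p(x)\le \tfrac12$ and $x\le 1$, setting $\bar L = \sup_n L_n$ yields $\bar L \le \tfrac12(1+\bar L)$, hence $\bar L \le 1$ and $L \le 2$. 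In particular condition~4 holds with a constant independent of $B$.

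Next I would choose the complexity parameters. Let $x^*$ attain $K = \max_x x\cdot B(x)$, which exists and is positive since $B$ is not uniformly $0$, and set $c = x^*$, $f = \tfrac12$, and $q = \tfrac12 B(x^*)$. If strictly more than a fraction $f=\tfrac12$ of a node's neighbors recommend at price $c = x^*$, then $\alpha > \tfrac12$, so the node buys with probability $\min(1,\alpha B(x^*)) \ge \tfrac12 B(x^*) = q$, verifying condition~5. Plugging in, $(1-f)cq = \tfrac12 \cdot x^* \cdot \tfrac12 B(x^*) = \tfrac14\, x^* B(x^*) = \tfrac{K}{4}$, so the complexity is at most $\frac{L}{(1-f)cq} \le \frac{2}{K/4} = \frac{8}{K} = O\!\left(\tfrac1K\right)$.

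The one point that needs care, and the main obstacle, is the uniformity in condition~5: unlike the Independent Cascade case, where a single recommendation already forces a fixed purchase probability and one may take $f=0$, here the purchase probability scales with the \emph{fraction} $\alpha$ of active neighbors, so $f$ must be bounded away from $0$ to prevent $q$ from collapsing on high-degree nodes. Taking $f=\tfrac12$ is precisely what makes $q = \tfrac12 B(x^*)$ a valid uniform lower bound over all admissible $\alpha$. A secondary subtlety is the boundary behaviour in the line reduction (the degree-$1$ endpoint, and the node where two seeded cascades meet with $\alpha = 1$); these only \emph{increase} purchase probabilities by bounded amounts and are absorbed into the factor of $2$ in $L \le 2\max_n L_n$.
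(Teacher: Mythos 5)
Your proof is correct and follows exactly the route the paper indicates for this lemma (the paper omits the details, noting only that the argument parallels Lemma~\ref{ICMLem} and is simpler because a non-free node on a line buys with probability at most $\tfrac12$): you bound $L$ by a universal constant via that same $\tfrac12$ observation and then choose $f$, $c$, $q$ at the maximizer of $x\cdot B(x)$ to get complexity $O(1/K)$. The minor informalities (the free-price case in the recursion for $L_n$, handled cleanly by induction, and the two-seed coupling) are at the same level of rigor as the paper's own treatment in Lemma~\ref{ICMLem}.
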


We omit the proof since it is similar to that of Lemma \ref{ICMLem}.
In fact, it is simpler since, on a line graph, a node either gets
the product for free or it has probability at most $\frac{1}{2}$ of
buying the product and passing on a recommendation.

\section{Approximating the Optimal Revenue}
\label{sec:approxalg}
In this section, we present our main theoretical contribution: a non-adaptive
seller strategy that achieves expected revenue within a constant factor of
the revenue from the optimal {\em adaptive} strategy.  We show the problem of
finding the exact optimal strategy is NP-hard (see section~\ref{sec:nphard}
in the appendix), so this kind of result is the best we can hope for.  Note
that our approximation guarantee is against the strongest possible optimum,
which is perhaps surprising: it is unclear a priori whether such a strategy
should even exist.

The strategy we propose is based on computing a {\em maximum-leaf
  spanning tree} (\maxleaf) of the underlying social network graph,
i.e., computing a spanning tree of the graph with the maximum number
of leaf nodes. The \maxleaf{} problem is known to be NP-Hard, and it
is in fact also MAX SNP-Complete, but there are several
constant-factor approximation algorithms known for the
problem~\cite{GJ79,KW91,LR98,Solis98}. In particular, one of these
is nearly linear-time~\cite{LR98}, making it practical to apply on
large online social network graphs. The seller strategy we attain
through this is an {\em influence-and-exploit} strategy that offers
the product to all of the interior nodes of the spanning tree for
free, and charges a fixed price from the leaves. Note that this
strategy works for all the buyer decision models discussed above,
including multi-price generalizations of both Independent Cascade
and Linear Threshold.

We consider the setting of Problem~\ref{problem:strategy}, where we
are given an undirected social network graph $G(V,E)$, a seed set
$S^0\subseteq V$ and a buyer decision model \M.  Throughout this
section, we will let $L$, $f$, $c$ and $q$ denote the quantities
that parametrize the model complexity, as described in Section
\ref{sec:BuyerDecision}. To simplify the exposition, we will assume
for now that the seed set is a singleton node  (i.e., $|S^0|=1$). If
this is not the case, the seed nodes can be merged into a single node,
and we can make much the same argument in that case. We will ignore
cashbacks for now, and return to address them at the end of the section.

The exact algorithm we will use is stated below:
\begin{itemize}
    \item Use the \maxleaf{} algorithm~\cite{LR98} to compute an
    approximate max-leaf spanning tree $T$ for $G$ that is rooted at
    $S_0$.
    \item Offer the product to each internal node of $T$ for free.
    \item For each leaf of $T$ (excluding $S_0$), independently flip a biased coin.
    With probability $\frac{1+f}{2}$, offer the product to the node
    for free. With probability $\frac{1-f}{2}$, offer the product to
    the node at cost $c$.
\end{itemize}
We henceforth refer to this strategy as \strategyml.

Our analysis will revolve around what we term as ``good'' vertices, defined
formally as follows:
\begin{defn}
  Given a graph $G(V,E)$, we define the {\em good} vertices to be
  the vertices with degree at least 3 and their neighbors.
\end{defn}
On the one hand, we show that if $G$ has $g$ good vertices, then the
\maxleaf{} algorithm will find a spanning tree with $\Omega(g)$
leaves. We then show that each leaf of this tree leads to
$\Omega(1)$ revenue, implying \strategyml{} gives $\Omega(g)$ revenue
overall. Conversely, we can decompose $G$ into at most $g$
line-graphs joining high-degree vertices, and the total revenue from
these is bounded by $gL = O(g)$ for all policies, which gives the
constant-factor approximation we need.

We begin by bounding the number of leaves in a max-leaf spanning
tree. For dense graphs, we can rely on the following fact
\cite{KW91,LR98}:
\begin{fact} \label{fact:maxleaf}
  The max-leaf spanning tree of a graph with minimum degree at least
  3 has at least $n/4+2$ leaves~\cite{KW91,LR98}.
\end{fact}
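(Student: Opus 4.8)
The plan is to prove this stronger, self-contained statement by a greedy \emph{expansion} argument in the style of Kleitman and West: grow a subtree $T$ of $G$ one batch of vertices at a time until it spans $G$, maintaining the invariant that $\ell(T) \ge |T|/4 + 2$, where $\ell(T)$ is the number of leaves of $T$ and $|T|$ its number of vertices. Since $G$ is connected with minimum degree at least $3$, I would initialize $T$ with a single vertex $v_0$ of degree at least $3$ together with its neighbors, so that $T$ has one internal vertex and at least $3$ leaves; this base case satisfies the invariant with equality, as $3 \ge 4/4 + 2$. When the process terminates $T$ is a spanning tree, and the invariant evaluated at $|T| = n$ yields $\ell(T) \ge n/4 + 2$, which is exactly the claim.

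The engine of the argument is a set of expansion moves applied to an \emph{immature leaf} $u$ (a leaf having at least one $G$-neighbor outside $T$), chosen in a fixed priority order. The first move applies when $u$ has $d \ge 2$ neighbors outside $T$: attach all of them as new leaves and let $u$ become internal, so $\Delta\ell = d - 1$ and $\Delta|T| = d$. The second move applies when $u$ has a unique outside neighbor $y$ that itself has $d' \ge 2$ neighbors outside $T \cup \{y\}$: attach $y$ and those $d'$ vertices, making both $u$ and $y$ internal, so $\Delta\ell = d' - 1$ and $\Delta|T| = 1 + d'$. In each case it suffices to verify $\Delta\ell \ge \tfrac14\Delta|T|$, and a one-line computation gives $d - 1 \ge d/4$ for $d \ge 2$ and $d' - 1 \ge (1 + d')/4$ for $d' \ge 2$. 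These routine cases are where each added vertex ``pays for itself'' at the required rate of one leaf per four vertices.

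The main obstacle is the residual case in which no immature leaf admits either move: every immature leaf $u$ has exactly one outside neighbor $y$, and every such $y$ has at most one neighbor outside $T \cup \{y\}$. Here the only available step attaches a pendant path, which adds vertices without immediately adding leaves and so threatens the invariant. This is precisely where the minimum-degree-$3$ hypothesis must be spent: a degree-$2$ chain forced into the tree has extra $G$-edges leaving it, and the heart of the proof is to show that these edges can always be exploited --- either by choosing which leaf to expand more cleverly, or by charging the path vertices against leaves created in a subsequent forced expansion --- so that the amortized rate of one leaf per four vertices is never violated. Handling this case correctly (equivalently, bounding the number of degree-$2$ internal vertices that the process is forced to create) is the delicate part of the analysis and the source of the exact constant $1/4$; the remaining bookkeeping and the termination of the expansion process are then straightforward.
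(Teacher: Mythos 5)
First, a point of orientation: the paper does not prove this statement at all --- it is quoted as a known fact, with the proof living in the cited references \cite{KW91,LR98} (this is the Kleitman--West theorem). So what you are attempting is a reproof of that theorem, and your skeleton is indeed faithful to the Kleitman--West style: grow a tree by expansions, and check that each expansion creates leaves at rate at least one per four vertices. Your arithmetic for the two easy moves is correct ($d-1 \ge d/4$ for $d \ge 2$, and $d'-1 \ge (1+d')/4$ for $d' \ge 2$), and the base case holds with equality. But the residual case --- every immature leaf has a unique outside neighbor, which itself has at most one outside neighbor --- is not ``delicate bookkeeping'' that can be deferred; it is the entire content of the theorem, and you supply no mechanism for it, only the assertion that the degree-$3$ edges ``can always be exploited.'' As written, this is a genuine gap.

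In fact the situation is worse than an omitted case: the invariant you propose to maintain, $\ell(T) \ge |T|/4 + 2$ after every step, is unmaintainable. Since the base case is tight ($3 = 4/4 + 2$), a single forced step with $\Delta\ell \le 0$ and $\Delta|T| \ge 1$ kills it, and such a step can be forced immediately. Take $v_0$ adjacent to $u_1, u_2, u_3$; let $u_1$ be adjacent to $u_2$ and $u_3$; let $u_2$ and $u_3$ both be adjacent to a new vertex $y$; and let $y$ have one further neighbor $z$ (all listed vertices now have degree exactly $3$, and the graph extends to a connected minimum-degree-$3$ graph through $z$). Starting the process at $v_0$: after the base star, $u_1$ is a dead leaf, $u_2$ and $u_3$ are symmetric immature leaves whose unique outside neighbor is $y$, and $y$ has only one neighbor ($z$) outside $T \cup \{y\}$, so neither of your moves applies; moreover \emph{every} possible continuation adds $y$ (alone or with $z$) with $\Delta\ell = 0$, dropping $\ell - |T|/4$ strictly below $2$. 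So no choice of expansion order rescues the stated invariant --- it must be replaced, not just defended harder. This is exactly what Kleitman and West do: their proof distinguishes dead from live leaves, uses look-ahead operations, and runs a charging argument that tolerates transient deficits, which is why the result is a theorem with a nontrivial proof rather than a two-case potential check. Given that the fact is standard and the paper cites it, the honest options are to keep the citation, or to actually carry out the amortization you allude to; the proposal as it stands does neither.
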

In general graphs, we cannot apply this result directly. However,
we can make any graph have minimum degree 3 by replacing degree-1
vertices with small, complete graphs and by contracting along edges
to remove degree-2 vertices. We can then apply Fact \ref{fact:maxleaf}
to analyze this auxiliary graph, which leads to the following result:
\begin{lemma} \label{lemma:maxleaf}
    Suppose a connected graph $G$ has $n_3$ vertices with degree at
    least $3$. Then $G$ has a spanning tree with at least
    $\frac{n_3}{8} + 1$ leaves.
\end{lemma}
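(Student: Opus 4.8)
The plan is to reduce the general case to Fact~\ref{fact:maxleaf} by transforming $G$ into an auxiliary graph $G'$ with minimum degree at least $3$, while carefully tracking how the leaf count of a spanning tree of $G'$ translates back to a spanning tree of $G$. The two obstructions to applying the fact directly are degree-$1$ and degree-$2$ vertices, and I would handle them by the two operations hinted at in the text. First, for each degree-$1$ vertex $v$, I would attach a small gadget (a constant-size complete graph, say $K_3$ or $K_4$) so that $v$ and the added dummy vertices all have degree at least $3$; this is a local modification that does not disturb the degree-$\ge 3$ vertices already present. Second, I would suppress degree-$2$ vertices by contracting one of their incident edges (equivalently, replacing a maximal path of degree-$2$ vertices by a single edge between its endpoints), repeatedly, until no degree-$2$ vertices remain.

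After these operations, $G'$ has minimum degree at least $3$ and still has $n_3$ (or at least some controlled number of) vertices of degree $\ge 3$ inherited from $G$. The key point is to lower-bound the total number of vertices $n'$ of $G'$ in terms of $n_3$. Since $G$ is connected with $n_3$ vertices of degree $\ge 3$, and since every edge added by a gadget and every surviving vertex can be charged to one of these high-degree vertices, I expect a bound of the form $n' \ge c\,n_3$ for an explicit constant, which when combined with the $n'/4 + 2$ leaf bound from Fact~\ref{fact:maxleaf} should yield the claimed $\frac{n_3}{8}+1$. I would apply the fact to $G'$ to get a spanning tree $T'$ with at least $n'/4 + 2$ leaves.

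The final and most delicate step is to translate $T'$ back into a spanning tree $T$ of the original graph $G$ and argue the leaf count only degrades by a controlled amount. Reversing the edge contractions reinserts the suppressed degree-$2$ vertices along tree edges or pendant paths; a leaf of $T'$ whose gadget was attached to a former degree-$1$ vertex must be mapped back so that the original degree-$1$ vertex becomes (or remains near) a leaf. The bookkeeping here is where the constant degrades from $1/4$ to $1/8$: some leaves of $T'$ live inside the artificial gadgets and must be discarded or re-attributed, and reinserting degree-$2$ vertices can turn a leaf into an internal node. The hard part will be showing that each of these losses is only a constant fraction, so that at least $\frac{n_3}{8}+1$ genuine leaves survive in $T$. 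I would make this precise by charging each discarded or demoted leaf to a distinct high-degree vertex, ensuring no vertex is charged more than a bounded number of times, which gives the factor-$2$ loss relative to Fact~\ref{fact:maxleaf}.
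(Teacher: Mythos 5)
Your overall plan --- $K_4$ gadgets for degree-$1$ vertices, elimination of degree-$2$ vertices, Fact~\ref{fact:maxleaf} on the auxiliary graph, then translation back --- is the same route the paper takes, but the two places where you defer the work are exactly where the content of the proof lies, and the methods you sketch for them do not go through. First, eliminating degree-$2$ vertices by contraction (or by replacing maximal degree-$2$ paths with single edges) fails whenever the two neighbors $u,w$ of a degree-$2$ vertex $v$ are adjacent, or when two degree-$2$ paths share both endpoints: the operation creates parallel edges (or a self-loop, if a cycle of degree-$2$ vertices is attached at a single vertex), and passing to the underlying simple graph lowers the degrees of $u$ and $w$, so both the minimum-degree-$3$ hypothesis of Fact~\ref{fact:maxleaf} and the invariant $n_3' \ge n_3$ needed for your reduction can be destroyed. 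A concrete failure: the theta graph (two vertices joined by three internally disjoint paths of length two) has $n_3 = 2$, but your suppression step turns it into a single edge, about which Fact~\ref{fact:maxleaf} says nothing. The paper handles this case by a different construction entirely: instead of contracting, it attaches a new pendant vertex $x$ to $v$ (raising $v$ to degree $3$, so the number of degree-$2$ vertices still drops), and then shows that any spanning tree of the augmented graph can be converted into a spanning tree of $G$ with the same number of leaves, using the edge exchange that replaces $(v,w)$ by $(u,w)$ when necessary. Some replacement for that idea is required; your proposal has nothing in its place.

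Second, your leaf accounting for the gadgets cannot be repaired by the charging scheme you describe. You propose to charge each leaf lost in translating $T'$ back to $T$ ``to a distinct high-degree vertex,'' each charged boundedly often --- but in a star $K_{1,n}$ there are $n$ degree-$1$ vertices, hence $n$ gadgets and up to $2n$ leaves lost when the gadgets are collapsed, yet only one vertex of degree at least $3$, so no such charging exists. The resolution in the paper is a case split you are missing: if $n_1 \ge \frac{n_3}{8}+1$, the lemma is immediate, since every spanning tree of $G$ has at least $n_1$ leaves; only when $n_1 < \frac{n_3}{8}+1$ does one run the gadget argument, where Fact~\ref{fact:maxleaf} gives at least $n_1 + \frac{n_3}{4} + 2$ leaves, the contraction loses at most $2n_1$ of them, and the remainder satisfies $\frac{n_3}{4}+2-n_1 \ge \frac{n_3}{8}+1$. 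Without that split, the gadget computation alone yields the bound $\frac{n_3}{4}+2-n_1$, which can be negative. (A minor point in the same direction: the $K_3$ version of your gadget does not work, since its dummy vertices would have degree $2$; it must be $K_4$ or something equivalent.)
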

\begin{proof}
    Let $n_1$ and $n_2$ denote the number of vertices of degree 1
    and 2 respectively, and let $M$ denote the number of leaves in a
    max-leaf spanning tree of $G$. If $n_1 = n_2 = 0$, the result
    follows from Fact \ref{fact:maxleaf}.

    Now, suppose $n_2 = 0$ but $n_1 > 0$. Clearly, every spanning
    tree has at least $n_1$ leaves, so the result is obvious if
    $n_1 \ge \frac{n_3}{8} + 1$. Otherwise, we replace each degree-1 vertex
    with a copy of $K_4$ (the complete graph on 4 vertices), one of whose
    vertices connects back to the rest of the graph. Let $G'$ denote the
    resulting graph. Then $G'$ has $4n_1 + n_3$ vertices, and they are all at
    least degree 3, so $G'$ has a spanning tree $T'$ with at
    least $n_1 + \frac{n_3}{4} + 2$ leaves.

    We can transform this into a spanning tree $T$ on $G$ by
    contracting each copy of $K_4$ down to a single point. Each
    contraction could transform up to 3 leaves into a single leaf,
    but it will not affect other leaves. Since there are exactly
    $n_1$ contractions that need to be done altogether, $T$ has at
    least $n_1 + \frac{n_3}{4} + 2 - 2n_1 \ge \frac{n_3}{8} + 1$
    leaves, as required.

    We now prove the result holds in general by induction on $n_2$.
    We have already shown the base case $(n_2 = 0)$. For the inductive step, we
    will define an auxiliary graph $G'$ with $n_2', n_3'$ and $M'$
    defined as for $G$. We will then show
    $n_2' < n_2, n_3' \ge n_3$, and for every spanning tree $T'$ on $G'$,
    there is a spanning tree $T$ on $G$ with at least as many leaves. This
    implies $M' \le M$, and using the inductive hypothesis, it
    follows that $M \ge M' \ge \frac{n_3'}{8} + 1 \ge \frac{n_3}{8}
    + 1$, which will complete the proof.

    Towards that end, suppose $v$ is a degree-2 vertex in $G$, and
    let its neighbors be $u$ and $w$. If $u$ and $w$ are not
    adjacent, we let $G'$ be the graph attained by contracting along
    the edge $(u,v)$. Then $n_2' = n_2 - 1$ and $n_3' = n_3$. Any
    spanning tree $T'$ on $G'$ can be extended back to a spanning
    tree $T$ on $G$ by uncontracting the edge $(u,v)$ and adding it
    to $T$. This does not decrease the number of leaves in the tree,
    so we are done.

    Next, suppose instead that $u$ and $w$ are adjacent. We cannot
    contract $(u,v)$ here since it will create a duplicate edge in $G'$.
    However, a different construction can be used. If the entire graph is
    just these 3 vertices, the lemma is trivial. Otherwise, let $G'$
    be the graph attained by adding a degree-1 vertex $x$ adjacent
    to $v$. Then $n_2' = n_2 - 1$ and $n_3' = n_3 + 1$. Now consider
    a spanning tree $T'$ of $G'$. We can transform this into a
    spanning tree $T$ on $G$ by removing the edge $(v,x)$ that must
    be in $T'$. This removes the leaf $x$ but if $v$ has degree 2
    in $T'$, it makes $v$ a leaf. In this case, $T$ and $T'$ have the same
    number of leaves, so we are done.

    Otherwise, $(u,v)$ and $(v,w)$ are also in $T'$, and since $G$
    was assumed to have more than 3 vertices, $u$ and $w$ cannot
    both be leaves in $T'$. Assume without loss of generality that
    $u$ is not a leaf. We then further modify $T$ by replacing $(v,w)$ with
    $(u,w)$. Now, $v$ is a leaf in $T$ and the only vertex whose
    degree has changed is $u$, which is not a leaf in either $T$ or
    $T'$. Therefore, $T$ and $T'$ again have the same number of
    leaves, and we are once again done.

    The result now follows from induction, as discussed above.
\end{proof}

We must further extend this to be in terms of the number of good
vertices $g$, rather than being in terms of $n_3$:
\begin{lemma} \label{lem:leaves}
  Given an undirected graph $G$ with $g$ good vertices, the
  \maxleaf{} algorithm~\cite{LR98} will construct a spanning tree
  with $\max(\frac{g}{50} + 0.5, 2)$ leaves.
\end{lemma}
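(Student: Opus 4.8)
The plan is to first reduce the statement about the \maxleaf{} algorithm to a statement about the \emph{optimal} max-leaf spanning tree: since the algorithm of \cite{LR98} is a constant-factor approximation, it suffices to show that the optimal max-leaf spanning tree of $G$ has $\Omega(g)$ leaves, and then to absorb the approximation ratio and the additive constants into the final bound (the ``$\max(\cdot,2)$'' simply records that any spanning tree of a graph containing a degree-$\ge 3$ vertex already has at least two leaves, which covers the case of small $g$). To bound the optimum I would classify the good vertices by their degree in $G$, writing $g = n_3 + a_1 + a_2$, where $n_3$ is the number of degree-$\ge 3$ vertices, $a_1$ the number of good vertices of degree $1$, and $a_2$ the number of good vertices of degree $2$; note that $a_1 \le n_1$, the total number of degree-$1$ vertices.

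I would then produce three separate lower bounds on the number $M^{\ast}$ of leaves in the optimal tree, each realized by an explicit spanning tree. First, every spanning tree makes all degree-$1$ vertices leaves, so $M^{\ast} \ge n_1 \ge a_1$. Second, Lemma \ref{lemma:maxleaf} gives $M^{\ast} \ge \frac{n_3}{8}+1$. Since $M^{\ast}$ is a maximum over all spanning trees, it dominates each of these simultaneously. The real work is the third bound, which must account for the degree-$2$ good vertices $a_2$; this is the step I expect to be the main obstacle, since neither forced leaves nor Lemma \ref{lemma:maxleaf} produces any leaf at a degree-$2$ vertex.

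For the degree-$2$ vertices I would pass to the graph $H$ obtained by suppressing every degree-$2$ vertex, i.e.\ replacing each maximal path of degree-$2$ vertices by a single edge joining its two endpoints (which have degree $1$ or $\ge 3$); then $H$ has $|V(H)| = n_1 + n_3$ vertices, possibly with parallel edges and loops. The key structural observation is that a degree-$2$ vertex is good only if it is adjacent to a degree-$\ge 3$ vertex, and on any suppressed path only the two internal vertices nearest the endpoints can have such a neighbor; hence each edge of $H$ carries at most two good degree-$2$ vertices. Fixing any spanning tree $T_H$ of $H$, the good degree-$2$ vertices lying on its $|V(H)|-1$ tree edges number at most $2(n_1+n_3-1)$. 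Every remaining (non-tree) edge of $H$ is redundant for connectivity, so I can reroute its suppressed path as pendant sub-paths hanging off the two already-connected endpoints; this turns at least one good degree-$2$ endpoint per redundant path into a leaf while still spanning all its internal vertices. The result is a genuine spanning tree of $G$, giving $M^{\ast} \ge \tfrac12\big(a_2 - 2(n_1+n_3)\big)$, where the factor $\tfrac12$ absorbs paths both of whose endpoints are good but which can spare only one of them as a leaf.

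Finally I would combine the three bounds by a short case analysis. Writing $g \le n_3 + n_1 + a_2$ and $a_2 \le 2(n_1+n_3) + a_2^{\mathrm{nt}}$, where $a_2^{\mathrm{nt}}$ is the count on redundant paths, and substituting $n_1 \le M^{\ast}$, $n_3 \le 8M^{\ast}$, $a_2^{\mathrm{nt}} \le 2M^{\ast}$, yields $g \le 29\,M^{\ast}$, so the optimum has at least $g/29$ leaves. Applying the constant-factor guarantee of the \maxleaf{} algorithm \cite{LR98} and carefully tracking the additive constants then produces the claimed bound $\max(\frac{g}{50}+0.5,\,2)$; the loose constant $50$ is exactly what one expects from the product of the three-way split, the $\frac18$ loss in Lemma \ref{lemma:maxleaf}, and the approximation ratio. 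The point demanding the most care is the rerouting argument of the previous paragraph: I must verify that breaking a redundant suppressed path and reattaching its vertices as pendants never disconnects the tree and indeed creates a leaf at a good vertex, and that the good vertices stranded on tree edges are correctly bounded by $2(n_1+n_3)$.
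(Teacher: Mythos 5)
Your overall architecture matches the paper's (lower-bound the number of leaves $M^{\ast}$ of an optimal spanning tree by $\Omega(g)$, then invoke the 2-approximation guarantee of \cite{LR98} and the fact that every non-degenerate tree has at least two leaves), and your suppression-and-rerouting treatment of degree-2 good vertices is a legitimate alternative idea. But there is a genuine gap: your constants do not reach the stated bound, and your closing claim that ``carefully tracking the additive constants'' recovers $\max\left(\frac{g}{50}+0.5,\,2\right)$ is contradicted by your own arithmetic. From $g \le 29\,M^{\ast}$ and a 2-approximate algorithm you obtain a tree with at least $g/58$ leaves, and $g/58 < \frac{g}{50}+0.5$ for every $g>0$; a \emph{multiplicative} shortfall cannot be repaired by additive bookkeeping. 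The slack enters your third bound in two places: (i) you only credit leaves that are themselves good degree-2 vertices, paying a factor $\tfrac12$, even though each rerouted redundant path with at least two internal vertices actually produces \emph{two} pendant tips, i.e.\ at least as many leaves as the (at most 2) good vertices it carries; and (ii) you forfeit every good vertex lying on the $n_1+n_3-1$ tree edges of $H$, which necessarily includes all pendant paths ending at degree-1 vertices. Even after fixing (i), balancing your three bounds gives only $g \le 27\,M^{\ast}$, hence $g/54$ after the approximation ratio --- still short of $g/50$.

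The paper avoids both losses with a single construction: it takes a minimal subtree connecting only the $n_3$ vertices of degree at least 3 (a Steiner tree on $A$), sacrificing good vertices on just $n_3-1$ connecting paths (at most 2 each), and then attaches every remaining vertex one at a time, preferring attachment to a vertex of $A$, so that on every non-Steiner primitive path $P$ the number of leaves equals the number of good vertices, $l_P = g_P$ (the leaves need not be the good vertices themselves; e.g.\ a pendant path contributes its degree-1 tip as the leaf matching the good vertex at its other end). This gives $M^{\ast} \ge g - 3n_3 + 1$ in one shot, and combining with $M^{\ast} \ge \frac{n_3}{8}+1$ from Lemma~\ref{lemma:maxleaf} via the weights $\frac{24}{25}$ and $\frac{1}{25}$ yields $M^{\ast} \ge \frac{g}{25}+1$, which halves exactly to the claimed $\frac{g}{50}+0.5$. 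So your argument is sound as an $\Omega(g)$ bound --- which is all that Theorem~\ref{thm:approx} ultimately needs --- but as written it does not prove Lemma~\ref{lem:leaves} with its stated constants.
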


\begin{proof}
  If $g = 0$, the result is trivial. Otherwise, let $n_3$ denote the
  number of vertices in $G$ with degree at least 3, and let $M$
  denote the number of leaves in a max-leaf spanning tree of $G$. By
  Lemma \ref{lemma:maxleaf}, we know $M \ge \frac{n_3}{8} + 1$.

  Now consider constructing a spanning tree as follows:
  \begin{itemize}
    \item[1.] Let $A$ denote the set of vertices in $G$ with degree at
    least 3.
    \item[2.] Set $T$ to be a minimal subtree of $G$ that connects
    all vertices in $A$.
    \item[3.] Add all remaining vertices in $G$ to $T$ one at a time. If
    a vertex $v$ could be connected to $T$ in multiple ways, connect
    it to a vertex in $A$ if possible.
  \end{itemize}
  To analyze this, note that $G - A$ can be decomposed into a
  collection of ``primitive'' paths. Given a primitive path $P$, let
  $g_P$ denote the number of good vertices on $P$ and let $l_P$
  denote the number of leaves $T$ has on $P$.

  In Step 2 of the algorithm above, exactly $n_3 - 1$ of these paths
  are added to $T$. For each such path $P$, we have $g_P \le 2$ and
  $l_P = 0$. On the remaining paths, we have $g_P = l_P$. Therefore,
  the total number of leaves on $T$ is at least
  \begin{eqnarray*}
  \sum_P l_P = (g - n_3) + \sum_P (l_P - g_P) &\ge& (g - n_3) - 2(n_3 - 1).
  \end{eqnarray*}
  Thus,
  \begin{eqnarray*}
    M &\ge& \max\left(\frac{n_3}{8} + 1, g - 3n_3 + 1\right)\\
      &\ge& \frac{24}{25} \cdot \left(\frac{n_3}{8} + 1\right) + \frac{1}{25} \cdot (g - 3n_3 + 1) = \frac{g}{25} + 1
  \end{eqnarray*}
  The result now follows from the
  fact that the \maxleaf{} algorithm gives a 2-approximation for the
  max-leaf spanning tree, and that every non-degenerate tree has at least
  two leaves.
\end{proof}

We can now use this to prove a guarantee on the performance of
\strategyml{} in terms of the number of good vertices on an arbitrary graph:
\begin{lemma} \label{lem:strategybound}
Given a social network $G$ with $g$ good vertices, \strategyml{}
guarantees an expected revenue of $\Omega((1-f)cq \cdot g)$.
\end{lemma}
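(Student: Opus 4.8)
The plan is to charge all of the expected revenue to the leaves of the computed spanning tree $T$ and then combine via linearity of expectation. By Lemma~\ref{lem:leaves}, the \maxleaf{} algorithm produces a tree $T$ with $\Omega(g)$ leaves, so it suffices to show that each leaf $v \neq S^0$ contributes $\Omega((1-f)cq)$ to the expected revenue of \strategyml{}; summing over the $\Omega(g)$ revenue leaves then yields the claimed bound. Throughout I will use that every internal node of $T$ (together with the seed) is offered the product for free and hence, by condition~3 of the model, becomes active and recommends to all of its neighbors.

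Fix a leaf $v$ and let $d = \deg_G(v) \geq 1$. I would partition the neighbors of $v$ into the set $\mathcal{I}$ of internal nodes (and the seed), which are active with certainty, and the set $\mathcal{L}$ of the remaining leaves, each of which is offered the product for free independently with probability $\frac{1+f}{2}$ and therefore becomes an active recommender with probability at least $\frac{1+f}{2}$. Note that $|\mathcal{I}| \geq 1$, since the parent of $v$ in $T$ is internal or is the seed. By condition~5, if strictly more than a fraction $f$ of $v$'s neighbors recommend and $v$ has been assigned price $c$, then $v$ purchases with probability at least $q$, generating revenue $c$. Since $v$ is assigned price $c$ with probability $\frac{1-f}{2}$ by a coin that is independent of the activity of its neighbors, the expected revenue from $v$ is at least $\frac{1-f}{2}\cdot q \cdot c \cdot \p{X > fd}$, where $X$ denotes the number of $v$'s neighbors that recommend.

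The crux is therefore to show $\p{X > fd} = \Omega(1)$, and this is the step I expect to require the most care. The natural worry is that a high-degree leaf may have only its single guaranteed parent in $\mathcal{I}$, so one must argue that enough of the randomly activated neighbors in $\mathcal{L}$ turn on. The key observation is that the free-offer probability $\frac{1+f}{2}$ strictly exceeds the required fraction $f$, so $\E{X} \geq |\mathcal{I}| + |\mathcal{L}|\cdot\frac{1+f}{2}$ comfortably exceeds $fd$. Rather than invoking a concentration bound, I would apply Markov's inequality to the complementary count $Y = d - X \in [0,d]$: one computes $\E{Y} \leq |\mathcal{L}|\cdot\frac{1-f}{2}$, and since the event $\{X \leq fd\}$ equals $\{Y \geq (1-f)d\}$, Markov gives $\p{X \leq fd} \leq \frac{|\mathcal{L}|\,(1-f)/2}{(1-f)d} = \frac{|\mathcal{L}|}{2d} \leq \frac{1}{2}$. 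Hence $\p{X > fd} \geq \frac{1}{2}$, uniformly over the degree of $v$ and over the split between $\mathcal{I}$ and $\mathcal{L}$.

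Combining the pieces, each revenue leaf contributes at least $\frac{1-f}{2}\cdot\frac{1}{2}\cdot qc = \frac{(1-f)cq}{4}$ in expectation. Summing over the $\Omega(g)$ leaves (discarding the seed, which removes at most one leaf) and applying linearity of expectation gives total expected revenue $\Omega((1-f)cq\cdot g)$, as required.
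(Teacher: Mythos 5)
Your proposal is correct and follows essentially the same route as the paper's own proof: both charge $\Omega((1-f)cq)$ expected revenue to each of the $\Omega(g)$ leaves guaranteed by Lemma~\ref{lem:leaves}, use the free internal backbone to guarantee recommendations, apply Markov's inequality to show that with probability at least $\frac{1}{2}$ more than a fraction $f$ of a leaf's neighbors recommend, and then multiply by the independent price coin $\frac{1-f}{2}$ and the purchase probability $q$ before invoking linearity of expectation. The only cosmetic difference is that you apply Markov to the complementary count of non-recommending neighbors over all of $v$'s neighbors, whereas the paper applies it to the fraction of $v$'s leaf-neighbors not offered the product for free and then folds in the internal neighbor separately; the resulting constants are the same.
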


\begin{proof}
  Let $T$ denote the spanning tree found by the \maxleaf{}
  algorithm. Let $U$ denote the set of interior nodes of $T$, and
  let $V$ denote the leaves of $T$ (excluding $S_0$). Since we
  assumed $|S_0| = 1$, Lemma \ref{lem:leaves} guarantees $|V| \ge
  \max(\frac{g}{50} - 0.5, 1) = \Omega(g)$.

  Note every vertex can be reached from $S_0$ by passing through
  nodes in $U$, each of which is offered the product for free. These
  nodes are guaranteed to accept the product, and therefore, they
  will collectively pass on at least one recommendation to each vertex.

  Now consider the expected revenue from a vertex $v \in V$. Let $M$
  be the random variable giving the fraction of $v$'s neighbors in
  $V$ that were {\em not} offered the product for free. We know
  $E[M] = \frac{1-f}{2}$, so with probability $\frac{1}{2}$, we have
  $M \le 1-f$.

  In this case, $v$ is guaranteed to receive recommendations from a
  fraction $f$ of its neighbors in $V$, as well as all of its
  neighbors in $U \cup S_0$ (of which there is at least 1). If we
  charge $v$ a total of $c$ for the product, it will then purchase
  the product with probability at least $q$, by the original definitions
  of $f$, $c$ and $q$. Furthermore,
  independent of $v$'s neighbors, we will ask this price from $v$
  with probability $\frac{1-f}{2}$. Therefore, our expected revenue
  from $v$ is at least $\frac{1}{2} \cdot q \cdot \frac{1-f}{2}
  \cdot c$.

  The result now follows from linearity of expectation.
\end{proof}

Now that we have computed the expected revenue from \strategyml, we need to
characterize the optimal revenue to bound the approximation ratio. This bound
is given by the following lemma.
\begin{lemma} \label{lem:optbound}
The maximum expected revenue achievable by any strategy (adaptive or
not) on a social network $G$ with $g$ good vertices is $O(L
\cdot g)$.
\end{lemma}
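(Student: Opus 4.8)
The plan is to upper bound the revenue of an arbitrary (adaptive) strategy by charging it against a decomposition of $G$ into a bounded number of line segments, to each of which the fourth model condition applies. First I would classify the vertices by degree, writing $n_1, n_2, n_3$ for the number of vertices of degree $1$, $2$, and $\ge 3$. Every vertex pays at most full price (the second condition), so each contributes at most $1$ to the revenue; hence the total revenue from the $n_1 + n_3$ vertices of degree $\ne 2$ is at most $n_1 + n_3$. I claim $n_1 + n_3 = O(g)$: the degree-$\ge 3$ vertices are themselves good, so $n_3 \le g$, while each degree-$1$ vertex is either good (its neighbor has degree $\ge 3$) or is the far endpoint of a maximal path of degree-$2$ vertices whose near end abuts a degree-$\ge 3$ vertex; in the latter case the degree-$2$ vertex adjacent to that high-degree vertex is good and can be charged the leaf. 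Since distinct such chains use distinct good vertices, this gives $n_1 = O(g)$, up to the degenerate case in which $G$ is a single path or cycle (where $g = 0$ and the entire revenue is already $O(L)$).

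Next I would handle the degree-$2$ vertices, which form a disjoint union of maximal chains (paths all of whose interior vertices have degree $2$). By the same charging argument, the number of such chains is $O(g)$: every chain that abuts a degree-$\ge 3$ vertex contributes a distinct good degree-$2$ vertex, and a chain abutting no high-degree vertex can only occur when $G$ itself is a path or cycle. It therefore suffices to show that the expected revenue extracted from the interior of any single chain is at most the constant $L$; summing over the $O(g)$ chains then yields $O(L g)$ and, combined with the $O(g)$ contribution above, the claimed bound.

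The key --- and most delicate --- step is the per-chain bound. A chain $w_1, \dots, w_k$ of degree-$2$ vertices can receive recommendations only from the two external vertices $y_0, y_1$ abutting its endpoints, so the $w_i$ evolve exactly as they would on the line graph $y_0, w_1, \dots, w_k, y_1$ in isolation. The hard part is that the real cascade is global and the seller is adaptive, so I cannot directly invoke the fourth condition, which bounds revenue on a line whose two endpoints are seeds. To bridge this, I would argue by a coupling that making $y_0$ and $y_1$ seeds available from the very first time step can only increase the revenue obtainable from $w_1, \dots, w_k$: in the seeded line the endpoints deliver every recommendation the real process delivers, and possibly more and earlier, so at identical prices each $w_i$ purchases in the coupled seeded process whenever it purchases in the real process (this monotonicity holds in both \ICM{C} and \LTM{B}, and is the property one must check for a general model \M). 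Consequently the chain revenue of the global strategy is dominated by the revenue of some strategy on the two-seed line, which the fourth condition caps at $L$.

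Finally I would assemble the pieces: the revenue is at most $(n_1 + n_3) + (\text{number of chains}) \cdot L = O(g) + O(g) \cdot L = O(L g)$, absorbing the degenerate path/cycle case into an additive $O(L)$ term (equivalently, assuming $g \ge 1$). I expect the counting steps to be routine; the real work is making the coupling in the third paragraph rigorous for the abstract model, since monotonicity of purchases in the set of received recommendations is used implicitly but is not among the five stated model conditions.
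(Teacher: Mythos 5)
Your proposal is correct and takes essentially the same route as the paper's proof: the paper bounds revenue from the degree-$\ge 3$ vertices by their count $n_3$, decomposes the rest of the graph into primitive paths of low-degree vertices (at most $g - n_3$ of them, each charged to a distinct good vertex of degree less than $3$), and caps each path's revenue at $L$ via the fourth model condition, yielding $n_3 + (g - n_3)L = O(L \cdot g)$. The only differences are bookkeeping (the paper folds the degree-$1$ vertices into the paths rather than counting them separately) and rigor: the step you rightly flag as the delicate one --- that seeding a chain's two external endpoints dominates the revenue obtainable from that chain under the global adaptive process --- is simply asserted by the paper in one sentence (``even if each endpoint of a path is guaranteed to recommend the product, the total revenue from the path is at most $L$''), so your explicit coupling/monotonicity concern is a fair critique of the paper rather than a gap in your own argument.
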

\begin{proof}
  Let $A$ denote the set of vertices in $G$ with degree at least 3,
  and let $n_3 = |A|$. Clearly, no strategy can achieve more than
  $n_3$ revenue directly from the nodes in $A$.

  As observed in the proof of Lemma \ref{lem:leaves}, however,
  $G - A$ can be decomposed into a collection of primitive paths. Since
  each primitive path contains at least one unique good vertex with
  degree less than 3, there is at most $g - n_3$ such paths. Even if
  each endpoint of a path is guaranteed to recommend the product,
  the total revenue from the path is at most $L$.

  Therefore, the total revenue from any strategy on such a graph is at
  most $n_3 + (g - n_3)L = O(L \cdot g)$.
\end{proof}

Now, we can combine the above lemmas to state the main theorem of
the paper, which states that \strategyml{} provides a constant factor
approximation guarantee for the revenue.
\begin{theorem} \label{thm:approx}
  Let $K$ denote the complexity of our buyer decision model \M.
  Then, the expected revenue generated by \strategyml{} on an
  arbitrary social network is $O(K)$-competitive with the expected
  revenue generated by the optimal (adaptive or not) strategy.
\end{theorem}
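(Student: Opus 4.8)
The plan is to obtain the theorem by dividing the two revenue bounds already established, since Lemmas~\ref{lem:strategybound} and~\ref{lem:optbound} were deliberately phrased in terms of the \emph{same} structural quantity: the number of good vertices $g$. Concretely, I would invoke Lemma~\ref{lem:strategybound} to lower bound the expected revenue of \strategyml{} (call it ALG) by $\Omega((1-f)cq \cdot g)$, and Lemma~\ref{lem:optbound} to upper bound the revenue of any strategy, adaptive or not (call it OPT), by $O(L \cdot g)$. Taking the ratio, the common factor of $g$ cancels, leaving
\begin{eqnarray*}
\frac{\text{OPT}}{\text{ALG}} &=& O\!\left(\frac{L \cdot g}{(1-f)cq \cdot g}\right) = O\!\left(\frac{L}{(1-f)cq}\right) = O(K),
\end{eqnarray*}
where the final equality is just the definition of the model complexity $K = \frac{L}{(1-f)cq}$. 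This yields the claimed $O(K)$-competitiveness directly.

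The one point needing care --- and the only real obstacle --- is the degenerate regime of small $g$, and in particular $g = 0$. When $g = 0$ the graph contains no vertex of degree at least $3$, so, being connected, it is a single path or cycle; here the $\Theta(g)$ comparison is vacuous and Lemma~\ref{lem:optbound} (whose primitive-path counting implicitly assumes at least one good vertex) cannot be applied literally. I would dispose of this case by hand: by condition~4 of the model the optimum on such a graph is $O(L)$, while Lemma~\ref{lem:leaves} still guarantees \strategyml{} a spanning tree with at least two leaves and hence at least one paying leaf, so ALG is $\Omega((1-f)cq)$; the ratio is again $O(L/((1-f)cq)) = O(K)$. More uniformly, I would observe that both lemmas carry the right additive constants --- at least one revenue-generating leaf in the lower bound, and the per-primitive-path budget of $L$ in the upper bound --- so the two estimates remain $\Theta(1)$-tight even for bounded $g$, after which the division above is valid in every regime and the theorem follows.
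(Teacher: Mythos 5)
Your proposal is correct and follows exactly the paper's own argument: the paper proves Theorem~\ref{thm:approx} by combining Lemmas~\ref{lem:strategybound} and~\ref{lem:optbound} with the definition $K = \frac{L}{(1-f)cq}$, which is precisely your ratio computation. Your additional handling of the degenerate case $g = 0$ is a sound (and slightly more careful) supplement that the paper leaves implicit, but it does not change the approach.
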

\begin{proof}
  This follows immediately from Lemmas \ref{lem:strategybound} and
  \ref{lem:optbound}, as well as the fact that $K =
  \frac{L}{(1-f)cq}$.
\end{proof}
As a corollary, we get the fact that the adaptivity gap is also constant:
\begin{corollary} \label{adaptGap}
  Let $K$ denote the complexity of our buyer decision model \M. Then
  the adaptivity gap is $O(K)$.
\end{corollary}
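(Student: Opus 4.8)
The plan is to read the bound directly off Theorem~\ref{thm:approx}, exploiting the fact that the strategy it analyzes, \strategyml, is itself non-adaptive. Write $\mathrm{OPT}_A$ for the expected revenue of the optimal adaptive strategy on a given instance, $\mathrm{OPT}_N$ for the expected revenue of the optimal non-adaptive strategy, and $R$ for the expected revenue of \strategyml{} on that same instance. By definition the adaptivity gap is the ratio $\mathrm{OPT}_A / \mathrm{OPT}_N$, and the goal is to bound it above by $O(K)$.

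First I would observe that, since \strategyml{} commits to a price for each node in advance and never reacts to the realized cascade, it belongs to the non-adaptive class; hence $R \le \mathrm{OPT}_N$, because $\mathrm{OPT}_N$ is by definition the best expected revenue achievable by any non-adaptive strategy. Next I would invoke Theorem~\ref{thm:approx}, whose guarantee is stated against the optimal strategy \emph{adaptive or not}, so the quantity it controls is exactly $\mathrm{OPT}_A$; this gives $\mathrm{OPT}_A \le O(K) \cdot R$. Chaining the two inequalities yields $\mathrm{OPT}_A \le O(K) \cdot R \le O(K) \cdot \mathrm{OPT}_N$, and dividing through by $\mathrm{OPT}_N$ produces $\mathrm{OPT}_A / \mathrm{OPT}_N = O(K)$, as required.

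There is essentially no difficult step: the entire content is that the approximation algorithm witnessing Theorem~\ref{thm:approx} happens to be non-adaptive, so a single non-adaptive strategy can simultaneously lower-bound $\mathrm{OPT}_N$ and $O(K)$-approximate $\mathrm{OPT}_A$. The only points warranting a moment's care are bookkeeping: confirming that $\mathrm{OPT}_N$ is an honest optimum over the non-adaptive class (so $R \le \mathrm{OPT}_N$ is legitimate), that $\mathrm{OPT}_A$ is precisely the ``adaptive or not'' benchmark of Theorem~\ref{thm:approx} (so it applies verbatim), and that the ratio is well-defined, i.e.\ $\mathrm{OPT}_N > 0$, which holds on any non-degenerate instance since the seed's neighbors can be sold to with positive probability under the model conditions. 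With these lined up, the corollary is immediate.
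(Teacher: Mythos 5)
Your proof is correct and is exactly the argument the paper intends (the paper states the corollary without an explicit proof, as an immediate consequence of Theorem~\ref{thm:approx}): since \strategyml{} is non-adaptive, its revenue lower-bounds the optimal non-adaptive revenue, while Theorem~\ref{thm:approx} bounds the optimal adaptive revenue by $O(K)$ times that same quantity, so the ratio is $O(K)$. Your extra bookkeeping about non-degeneracy is a reasonable touch but not something the paper concerns itself with.
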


Now we briefly address the issue of cashbacks that were ignored in this
exposition. We set the cashback $r$ to be a small fraction of our expected revenue
from each individual $r_0$, i.e. $r=z\cdot r_0$, where $z < 1$. Then, our total profit
will be $n \cdot r_0 \cdot (1 - z)$. Adding this cashback decreases our total profit by a constant
factor that depends on $z$, but otherwise the argument now carries through as
before, and nodes now have a positive incentive to pass on recommendations.

In light of Corollary \ref{adaptGap}, one might ask whether the adaptivity gap
is not just 1. In other words, is there any benefit at all to be gained from using non-adaptive
strategies? In fact, there is. For example, consider a social network consisting of 4
nodes $\{v_1, v_2, v_3, v_4\}$ in a cycle, with $v_3$ connected to two other
isolated vertices. Suppose furthermore that a node will accept a recommendation with
probability 0.5 unless the price is 0, in which case the node will accept it
with probability 1. On this network, with seed set $S^0 = \{v_1\}$,
the optimal adaptive strategy is to always demand full price unless exactly one
of $v_2$ and $v_4$ purchases the product initially, in which case $v_3$ should be
offered the product for free. This beats the optimal non-adaptive strategy by a 
factor of 1.0625.

\section{Local Search} \label{sec:localsearch}
In this section, we discuss how an arbitrary seller strategy can be tweaked
by the use of a local search algorithm. Taken on its own, this technique can
sometimes be problematic since it can take a long time to converge to a good
strategy. However, it performs very well when applied to an already good
strategy, such as \strategyml. This approach of combining theoretically sound
results with local search to generate strong techniques in practice is similar
in spirit to the recent \texttt{k-means++} algorithm \cite{AV07}.

Intuitively, the local search strategy for pricing on social networks works as follows:
\begin{itemize}
    \item Choose an arbitrary seller strategy $S$ and an arbitrary node $v$ to edit.
    \item Choose a set of prices $\{p_1, p_2, \ldots, p_k\}$ to consider.
    \item For each price $p_i$, empirically estimate the expected revenue $r_i$ that is
       achieved by using the price $p_i$ for node $v$.
    \item If any revenue $r_i$ beats the current expected revenue (also estimated
       empirically) by some threshold $\epsilon$, then change $S$ to use the price $p_i$
       for node $v$.
    \item Repeat the preceding steps for different nodes until there are no more improvements.
\end{itemize}
Henceforth, we call this the \LS{} algorithm for improving seller strategies.

To empirically estimate the revenue from a seller strategy, we can always
just simulate the entire process. We know who has the product initially, we
know what price each node will be offered, and we know the probability each
node will purchase the product at that price after any number of
recommendations. Simulating this process a number of times and taking the
average revenue, we can arrive at a fair approximation at how good a strategy
is in practice. In fact, we can prove that performing local search on any
input policy will ensure that the seller gets at least as much revenue as the
original policy with high probability. The proof of this fact holds for any
simulatable input policy, and proceeds by induction on the evolution tree of
the process. The proof is somewhat technical, so we will skip it, and instead
focus on the empirical question of the advantage provided by local search.

In light of the fact that local search can only improve the revenue (and never
hurt it), it seems that one should always implement local search for any
policy.  There is a important technical detail that complicates this,
however. Suppose we wish to evaluate strategies $S_1$ and $S_2$, differing
only on one node $v$. If we independently run simulations for each strategy,
it could take thousands of trials (or more!) before the systematic change to
one node becomes visible over the noise resulting from random choices made by
the other nodes. It is impractical to perform these many simulations on a
large network every time we want to change the strategy for a single node.

Fortunately, it is possible to circumvent this problem using an observation
first noted in~\cite{KKT03}. Let us consider the
Linear Threshold model \LTM{B}. In this case, all randomness occurs before
the process begins when each node chooses a threshold that encodes how
resistant it is to buying the product. Once these thresholds have been fixed,
the entire sales process is deterministic. We can now change the strategy
slightly and {\em maintain the same thresholds} to isolate exactly what
effect this strategy change had. Any model, including Independent Cascade,
can be rephrased in terms of thresholds, making this technique possible.

The \LS{} algorithm relies heavily on this observation. While comparing
strategies, we choose several threshold lists, and simulate each strategy
against the same threshold lists.  If these lists are not representative, we
might still make a mistake drawing conclusions from this, but we will not
lose a universally good signal or a universally bad signal under the weight
of random noise.

With this implementation, empirical tests (see the next section) show the
\LS{} algorithm does do its job: given enough time, it will improve virtually
any strategy enough to be competitive.  It is not a perfect solution,
however. First of all, it can still make small mistakes while doing the
random estimates, possibly causing a strategy to become worse over
time\footnote{Note that if we choose $\epsilon$ and the number of trials
  carefully, we can make this possibility vanishingly small (this is also the
  intuition behind the local search guarantee, as we had mentioned
  earlier. In practice, however, it is usually better to run fewer trials and
  accept the possibility of regressing slightly.}.  Secondly, it is possible
to end up with a sub-optimal strategy that simply cannot be improved by any
local changes. Finally, the \LS{} algorithm can often take many steps to
improve a bad strategy, making it occasionally too slow to be useful in
practice.

Nonetheless, these drawbacks really only becomes a serious problem if one
begins with a bad strategy.  If one begins with a relatively good strategy --
for example \strategyml{} -- the \LS{} algorithm performs well, and is almost
always worth doing in practice. We justify this claim in the next section.

\subsection{Experimental Results}
In this section, we provide experimental evidence for the efficacy of the
\LS{} algorithm in improving the revenue guarantee. Note that in these
experiments, we need to assume a benchmark strategy as finding the optimal
strategy is NP-hard (see section~\ref{sec:nphard}). We pick a very simple
strategy \rand, which picks a random price independently for each node. The
results demonstrate that even this naive strategy can be coupled with the
\LS{} algorithm to do well in practice.

\begin{figure*}[htpb]
  \centering
  \subfigure[Random preferential attachment graph]{
    \includegraphics[scale=0.5,viewport=50 70 750 540,clip]{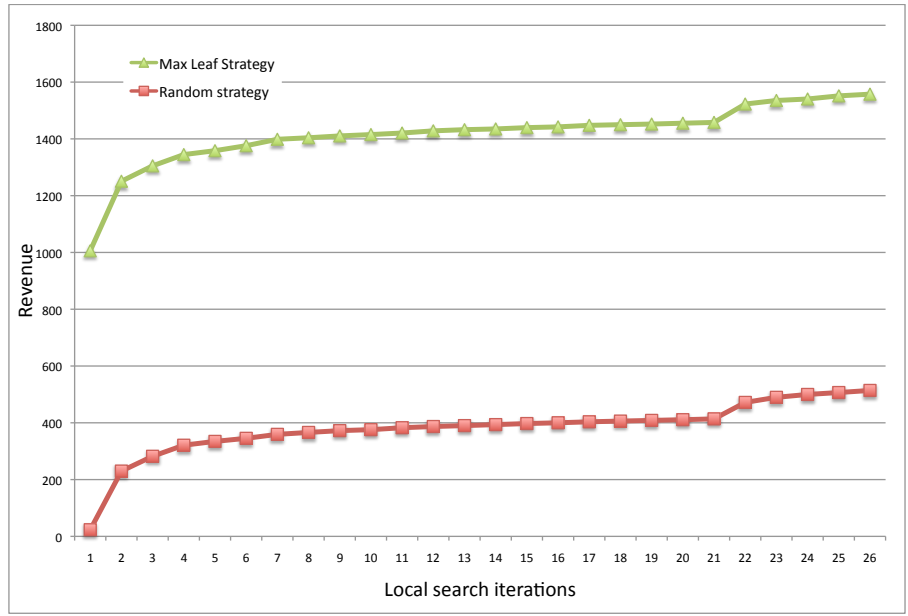}
    \label{fig:prefattachment}
  }
  \subfigure[Youtube subgraph]{
    \includegraphics[scale=0.5,viewport=50 70 750 540,clip]{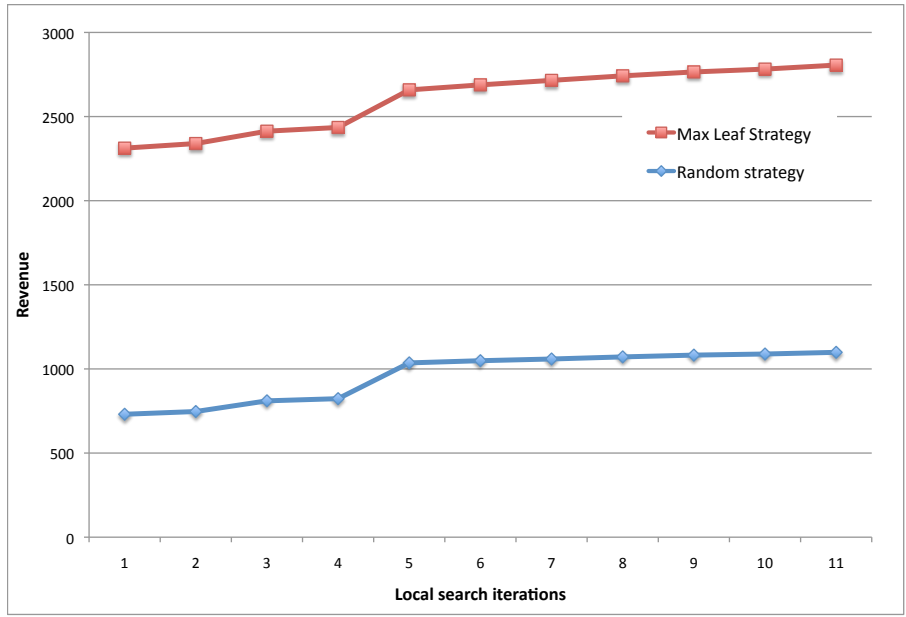}
    \label{fig:youtube}
  }
  \label{fig:revenue}
  \caption{The variation in revenue generated by \rand{} and \strategyml{} with
    the iterations of the \LS{} algorithm. The data is averaged over $10$ runs
    of a $1000$ node random preferential attachment
    graph~\subref{fig:prefattachment} or a $10000$ node subgraph of
    YouTube~\subref{fig:youtube}, starting with a random seed each time.}
\end{figure*}

We simulate the cascading process on two kind of graphs. The first graph we
study is a randomly generated graph, based on the preferential attachment
model that is a popular model for representing social
networks~\cite{NWS02}. We generate a $1000$ node preferential attachment
graph at random, and simulate the cascading process by picking a random node
as the seed in the network. The probability model we examine is a step
function (see the second example given in Lemma~\ref{ICMLem}) of
probabilities. We note that the function is necessarily arbitrary. The
result of one particular parameter settings are shown in
figure~\ref{fig:prefattachment}, which plots average revenue obtained by the
two pricing strategies: \rand{} and \strategyml. Each point on the figure is
obtained by average revenue over 10 runs on the same graph but with a
different (random) seed.  The horizontal axis indicates the number of \LS{}
iterations that were done on the graph, where each iteration consisted of
simulating the process 50 times, and choosing the best value over the
runs. It is clear from the graph that \strategyml{} does quite well even
without the addition of \LS{}, although the addition of \LS{} does increase
the revenue. On the other hand, the \rand{} strategy performs poorly on its
own, but its revenue increases steadily with the iterations of the \LS{}
algorithm. We note that the difference between the revenue from the two
policies does vary (as expected) with the probability model, and the
difference between the revenue is not as large in all the different runs. But
the difference does persist across the runs, especially when the strategies
are run without the local search improvement.

We also conduct a similar simulation with a real-world network, namely the
links between users of the video-sharing site YouTube.\footnote{The network
  can be freely downloaded; see~\cite{MMGDB07} for details.}
The YouTube network has millions of nodes, and we only study a subset of
$10,000$ nodes of the network. We simulate the random process as earlier, and
the results are shown in figure~\ref{fig:youtube}. Again, we note that
\strategyml{} does very well on its own, easily beating the revenue of
\rand. The \rand{} strategy does improve a lot with \LS, but it fails to
equalize the revenue of \strategyml. The large size of the YouTube graph and
the expensive nature of the \LS{} algorithm restrict the size of the
experiments we can conduct with the graph, but the results from the above
does experiments do offer some insights. In particular, \strategyml{}
succeeds in extracting a good portion of the revenue from the graph, if we
consider the revenue obtained from \strategyml{} combined with \LS{} based
improvements to be the benchmark.  Further, \LS{} can improve the revenue
from any strategy by a substantial margin, though it may not be able to
attain enough revenue when starting with a sub-optimal strategy such as
\rand. Finally, we observe that the combination of \strategyml{} and \LS{}
generates the best revenue among our strategies, and it is an open question
as to whether this is the optimal adaptive strategy.

\section{Conclusions}
In this work, we discussed pricing strategies for sellers distributing a
product over social networks through viral marketing.  We show that computing
the optimal (one that maximizes expected revenue) non-adaptive strategy for a
seller is NP-Hard. In a positive result, we show that there exists a
non-adaptive strategy for the seller which generates expected revenue that is
within a constant factor of the expected revenue generated by the optimal
adaptive strategy. This strategy is based on an {\em influence-and-exploit}
policy which computes a max-leaf spanning tree of the graph, and offers the
product to the interior nodes of the spanning tree for free, later on
exploiting this influence by extracting its profit from the leaf nodes of the
tree. The approximation guarantee of the strategy holds for fairly general
conditions on the probability function.

\section{Open Questions}
The added dimension of pricing to influence maximization models poses a host
of interesting questions, many of which are open. An obvious direction in
which this work could be extended is to think about influence models stronger
than the model examined here. It is also unclear whether the assumptions on
the function $C(\cdot)$ are the minimal set that is required, and it would be
interesting to remove the assumption that there exists a price at which the
probability of acceptance is 1. A different direction of research would be to
consider the game-theoretic issues involved in a practical system. Namely, in
the model presented here, we think of each buyer as just sending the
recommendations to all its friends and ignore the issue of any ``cost''
involved in doing so, thereby assuming all the nodes to be non-strategic. It
would be very interesting to model a system where the nodes were allowed to
behave strategically, trying to maximize their payoff, and characterize the
optimal seller strategy (especially w.r.t. the cashback) in such a setting.

\section{Acknowledgments}
Supported in part by NSF Grant ITR-0331640, TRUST (NSF award number
CCF-0424422), and grants from Cisco, Google, KAUST, Lightspeed, and
Microsoft. The third author is grateful to Mukund Sundararajan and Jason
Hartline for useful discussions.

\bibliographystyle{plain}
\bibliography{references}

\section{Appendix}
\vspace{2mm}
\subsection{Hardness of finding the optimal strategy}
\label{sec:nphard}
In this section, we show that Problem~\ref{problem:strategy} is
NP-hard even for a very simple buyer model \M{} by a reduction from
vertex cover with bounded degree (see~\cite{GJ79} for the hardness of
bounded-degree vertex cover). Letting $d$ denote the degree
bound, and letting $p = \frac{1}{4d}$, we will use an Independent
Cascade Model \ICM{C} with:
\begin{eqnarray*}
    C(x) = \left\{
      \begin{array}{ll}
        1 & \textrm{ if } x < p,\\
        0 & \textrm{ if } x \ge p
      \end{array}\right.
\end{eqnarray*}

Intuitively, the seller has to partition the nodes into ``free''
nodes and ``full-price'' nodes. In the former case, nodes are offered
the product for free, and they accept it with probability 1 as soon
as they receive a recommendation. In the latter case, nodes are
offered the product for price 1, and they accept each recommendation
with probability $p$. (Note that the seller is {\em allowed} to use
other prices between $0$ and $1$ but a price of $1$ is always
better.)

We are going to use a special family of graphs illustrated in
Figure~\ref{fig:reduction}. The graph consists of four layers:
\begin{itemize}
\item A singleton node $s$, which we will use as the only initially
  active node (i.e., $S^0=\{s\}$);
\item $s$ links to a set of $n$ nodes, denoted by $V_1$;
\item Nodes in $V_1$ also link to another set of nodes, denoted by $V_2$.
Each node in $V_1$ will be adjacent to $d$ nodes in $V_2$, and each
node in $V_2$ will be adjacent to $2$ nodes in $V_1$ (so
$|V_2|=dn/2$);
\item Each node $v\in V_2$ also links to $k = 20d$ new nodes, denoted by
$W_v$; these $k$ nodes do not link to any other nodes. The union of
all $W_v$'s is denoted by $V_3$.
\end{itemize}

\begin{figure}[htpb]
  \centering
  \includegraphics[scale=0.7]{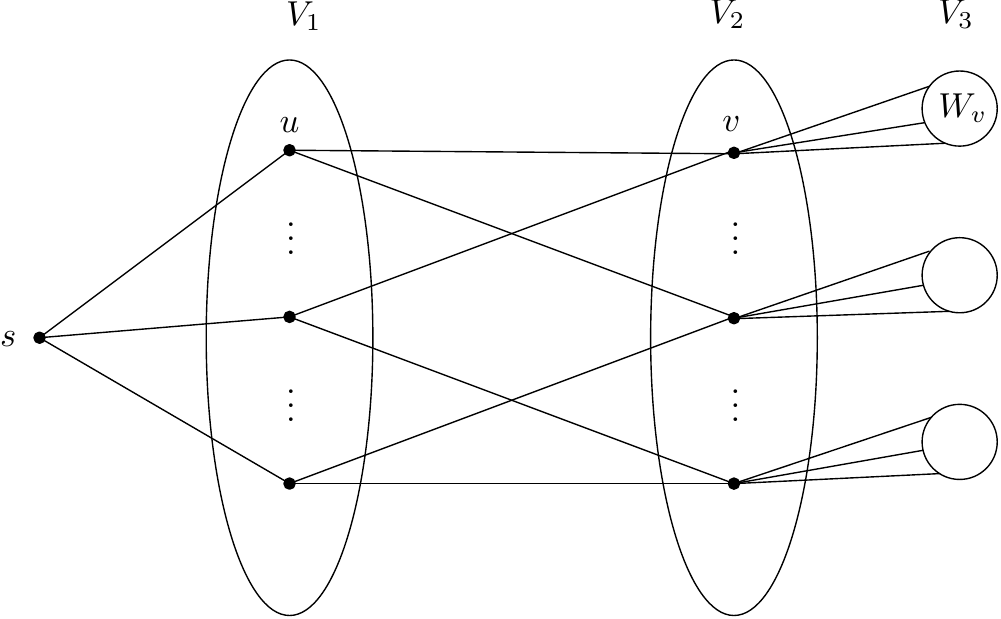}
  \caption{Reducing Bounded-Degree Vertex Cover to Optimal Network Pricing}
  \label{fig:reduction}
\end{figure}

We first sketch the idea of the hardness proof. The connection
between $V_1$ and $V_2$ will be decided by the vertex cover
instance: given a vertex cover instance $G'(V,E)$ with bounded
degree $d$, we construct a graph $G$ as above where $V_1=V$ and
$V_2=E$, adding an edge between $V_1$ and $V_2$ if the corresponding
vertex is incident to the corresponding edge in $G'$. The key lemma
is that, in the optimal pricing strategy for $G$, the subset of nodes
in $V_1$ that are given the product for free is the minimum set that
covers $V_2$ (i.e., a minimum vertex cover of $G'$).

To formalize this, first note that, in an optimal strategy, all nodes
in $V_3$ should be full-price. Giving the product to them for free
gets 0 immediate revenue, and offers no long-term benefit since nodes
in $V_3$ cannot recommend the product to anyone else. If the nodes
are full-price, on the other hand, there is at least a chance at
some revenue.

On the other hand, we show the optimal strategy must also ensure each
vertex in $V_2$ eventually becomes active with probability 1.
\begin{lemma} \label{lem:v2}
  In an optimal strategy, every node $v \in V_2$ is free, and can be
  reached from $s$ by passing through free nodes.
\end{lemma}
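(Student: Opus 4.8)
The plan is to establish both claims — that each $v \in V_2$ is free and that it lies below a free $V_1$-neighbor — by local exchange arguments against a hypothetical optimal strategy, all driven by the single observation that activating one node of $V_2$ is worth a full constant of downstream revenue. Concretely, if $v \in V_2$ becomes active it recommends to all $k = 20d$ of its private followers in $W_v \subseteq V_3$ (which, as already argued, are full-price in any optimal strategy), and each accepts independently with probability $p = \frac{1}{4d}$; hence the expected revenue drawn from $W_v$ is exactly $kp \cdot \p{v \text{ active}} = 5 \cdot \p{v \text{ active}}$. This dwarfs the at-most-$1$ revenue obtainable from $v$ itself, and it is the quantity on which every exchange pivots. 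I will also use the standard coupling fact that lowering any node's price from full to free cannot decrease any node's eventual purchase probability (couple the per-recommendation coins); so whenever I switch one node to free I may charge the lost revenue to that node alone and treat every other revenue term as non-decreasing.

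First I would show each $v \in V_2$ is free. Suppose not; let $\pi$ be its purchase probability at full price and $\rho$ the probability it receives at least one recommendation. Since $v$'s only potential recommenders are its two $V_1$-neighbors and $s$ reaches each of them, $\rho \ge p > 0$; moreover, conditioning on the number of recommendations received (at most two) and summing the per-recommendation acceptance probability $p$ gives $\pi \le p\,\mathbf{E}[\#\text{recs}] \le 2p\rho$. Switching $v$ to free raises its purchase probability to at least $\rho$, so the $W_v$ term grows by at least $5(\rho - \pi)$ while we forfeit only $\pi$ from $v$ itself and nothing elsewhere. The net change is at least $5\rho - 6\pi \ge \rho(5 - 12p) \ge 2\rho > 0$ for $d \ge 1$, contradicting optimality.

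Finally I would show $v$ has a free $V_1$-neighbor, which is exactly what ``reachable through free nodes'' means here. By the previous step $v$ is free, so it purchases with probability $\rho_v$ and contributes revenue $5\rho_v$; were any $V_1$-neighbor free we would have $\rho_v = 1$. So suppose both neighbors $u_1, u_2$ are full-price. The crux — and the step I expect to be the main obstacle — is lower-bounding the probability that \emph{both} stay inactive, which is delicate because the undirected cascade lets an active $V_2$ node recommend \emph{back} into $V_1$, creating feedback. I would resolve this by a monotone coupling that hands $u_1$ and $u_2$ the maximum possible set of recommendations (one from $s$ and one from each of their at most $d$ neighbors in $V_2$); since extra recommendations only help, the true probability that both remain inactive is at least the clean product $(1-p)^{2(d+1)}$, and the choices $p = \frac{1}{4d}$, $k = 20d$ make this exceed $\tfrac{1}{5}$ for every $d \ge 1$. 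Hence $\rho_v \le 1 - (1-p)^{2(d+1)}$, so making $u_1$ free forces $v$ (and thus all of $W_v$) active with probability $1$, a gain of at least $5(1-\rho_v) \ge 1$, against a loss of at most $\p{u_1 \text{ active}} \le (1+d)p \le \tfrac12$ from $u_1$ and non-negative changes everywhere else. The net is strictly positive, contradicting optimality and completing both claims.
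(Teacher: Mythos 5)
Your proposal is correct, and at its core it is the same argument the paper gives: a local exchange against a hypothetical optimum, pivoting on the observation that activating $v \in V_2$ is worth $kp = 5$ in expected revenue from $W_v$, with a monotone (live-edge) coupling ensuring that switching a node to free never decreases revenue elsewhere, and union bounds of the form $(d+1)p$ or $2dp$ controlling activation probabilities when the claimed conditions fail. The only real difference is organizational: the paper performs one combined exchange — it shows $q = \mathbf{Pr}[v \textrm{ active}] < 2dp$ under either failure mode, concludes the revenue from $u_1$, $v$, and $W_v$ is at most $2 + kqp < 4.5$, and then makes both $u_1$ and $v$ free to collect $kp = 5$ — whereas you split the contradiction into two single-node exchanges (first toggling $v$, then $u_1$), each with separate accounting ($5\rho - 6\pi \ge \rho(5-12p) > 0$, and gain $\ge 1$ versus loss $\le \frac{1}{2}$). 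Your version is a bit longer, but it is slightly more careful on one point the paper glosses: the feedback from $V_2$ back into $V_1$, which you neutralize by counting the edge from $v$ itself among the $2(d+1)$ incoming edges in the bound $(1-p)^{2(d+1)} > \frac{1}{5}$, rather than arguing about which node activates ``before'' which.
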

\begin{proof}
Suppose, by way of contradiction, that the optimal strategy has a node
$v \in V_2$ that does not satisfy these conditions. Let $u_1$ and
$u_2$ be the two neighbors of $v$ in $V_1$, and let $q$ denote the
probability that $v$ eventually becomes active.

We first claim that $q < 2dp$. Indeed, if $v$ is full-price, then
even if $u_1$ and $u_2$ become active, the probability that $v$
becomes active is $1 - (1-p)^2 < 2p$. Otherwise, $u_1$ and $u_2$ are
both full-price. Since $u_1$ and $u_2$ connect to at most $2d$ edges
other than $v$, the probability that one of them becomes active
before $v$ is at most $1 - (1-p)^{2d} < 2dp$. Thus, $q < 2dp$.

It follows that the total revenue that this strategy can achieve from
$u_1$, $v$, and $W_v$ is $2 + kqp < 2 + 2kdp^2 = 4.5$. Conversely,
if we make $u_1$ and $v$ free, we can achieve $kp = 5$ revenue from
the same buyers. Furthermore, doing this cannot possibly lose
revenue elsewhere, which contradicts the assumption that our
original strategy was optimal.
\end{proof}

It follows that, in an optimal strategy, all of $V_3$ is full-price,
all of $V_2$ is free, and every node in $V_2$ is adjacent to a free
node in $V_1$. It remains only to determine $C$, the nodes in $V_1$,
that an optimal strategy should make free. At this point, it should be
intuitively clear that $C$ should correspond to a minimum
vertex-cover of $V_2$. We now formalize this as follows:

\begin{lemma} \label{lem:v1}
    Let $C$ denote the set of free nodes in $V_1$, as chosen by an
    optimal strategy. Then $C$ corresponds to a minimum vertex cover
    of $G'$.
\end{lemma}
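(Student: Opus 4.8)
The plan is to show that the set $C$ of free nodes in $V_1$ must be both a vertex cover of $G'$ and, among all vertex covers, of minimum size, since the expected revenue is a strictly decreasing function of $|C|$. The proof will therefore have two halves: a feasibility argument establishing that $C$ is a valid cover, and an optimality argument establishing that the optimal strategy cannot afford to make $C$ any larger than the minimum.

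First I would establish that $C$ is a vertex cover. By Lemma~\ref{lem:v2}, every node $v \in V_2$ must be free and reachable from $s$ through free nodes. Since the only nodes adjacent to $v$ in $V_1$ are its two endpoints $u_1, u_2$, reachability forces at least one of these two $V_1$-neighbors to be free. Translating back through the correspondence $V_1 = V$, $V_2 = E$, this says precisely that for every edge of $G'$, at least one incident vertex lies in $C$ — i.e., $C$ is a vertex cover of $G'$.

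Next I would compute the revenue as a function of $|C|$ and show it is strictly decreasing, which forces $|C|$ to be minimum among all covers. Once $C$ is fixed as a cover, all of $V_2$ and the free part of $V_1$ contribute zero revenue (they are given away), so the only revenue comes from $V_3$ and from the full-price nodes of $V_1$. Since Lemma~\ref{lem:v2} guarantees every $V_2$ node becomes active with probability $1$, each of the $k$ full-price nodes in $W_v$ yields expected revenue exactly $p$, giving a fixed total of $|V_2| \cdot k p$ from $V_3$ that is independent of $C$. The full-price nodes in $V_1$ (there are $n - |C|$ of them) each yield some positive expected revenue when recommended by $s$. Hence the total expected revenue is a constant plus a positive per-node contribution summed over the $n - |C|$ full-price $V_1$ nodes, so it strictly increases as $|C|$ shrinks. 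Among all vertex covers, the optimal strategy must therefore choose one of minimum cardinality, which is exactly a minimum vertex cover of $G'$.

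The main obstacle I anticipate is the bookkeeping in the revenue comparison: I must verify carefully that making a $V_1$ node free versus full-price does not change the downstream revenue from $V_2$ or $V_3$ (it does not, since $V_2$ is already active with probability $1$ regardless), so that the only net effect of enlarging $C$ is the \emph{loss} of the positive full-price revenue from the newly-freed $V_1$ node. I would also need to confirm the contribution from each full-price $V_1$ node is genuinely positive and bounded away from zero, so that no cover of non-minimal size could ever match the revenue of a minimum cover; this follows because $s$ is active and recommends each $V_1$ node, giving it acceptance probability $p > 0$ at full price. Once these two accountings are pinned down, the equivalence between the optimal pricing strategy and minimum vertex cover — and hence the NP-hardness reduction — is complete.
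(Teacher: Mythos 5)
Your proposal is correct and follows essentially the same route as the paper: Lemma~\ref{lem:v2} gives that $C$ is a vertex cover, and the revenue then decomposes as a fixed $p|V_3|$ term plus a per-node contribution from each of the $n-|C|$ full-price $V_1$ nodes, so the optimum must take $|C|$ of minimum size. The only difference is that the paper pins down the per-node contribution exactly as $1-(1-p)^{d+1}$ (every full-price $V_1$ node has all $d+1$ of its neighbors active with probability 1), which makes explicit the uniformity across nodes and across covers that your monotonicity claim ``revenue depends only on $|C|$'' relies on, whereas you only lower-bound the contribution by $p$.
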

\begin{proof}
    As noted above, every node in $V_2$ must be adjacent to a node
    in $C$, which implies $C$ does indeed correspond to a vertex
    cover in $G'$.

    Now we know an optimal strategy makes every node in $V_2$ free,
    and every node in $V_3$ full-price. Once we know $C$, the strategy
    is determined completely. Let $x_C$ denote the expected revenue
    obtained by this strategy. Since all nodes in $V_2$ are free and
    are activated with probability $1$, we know the strategy achieves
    0 revenue from $V_2$ and $p|V_3|$ expected revenue from $V_3$.

    Among nodes in $V_1$, the strategy achieves 0 revenue for free
    nodes, and exactly $1 - (1-p)^{d+1}$ expected revenue for each
    full-price node. This is because each full-price node is
    adjacent to exactly $d+1$ other nodes, and each of these nodes
    is activated with probability 1. Therefore,
    $x_C = (|V_1| - |C|)(1 - (1-p)^{d+1}) + p|V_3|$, which is clearly
    minimized when $C$ is a minimum-vertex cover.
\end{proof}

Therefore, optimal pricing, even in this limited scenario, can be
used to calculate the minimum-vertex cover of any bounded-degree
graph, from which NP-hardness follows.

\begin{theorem} \label{lem:twocoupon}
  Two Coupon Optimal Strategy Problem is NP-Hard.
\end{theorem}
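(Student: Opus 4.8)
The plan is to assemble the explicit reduction together with the two structural lemmas into a polynomial-time many-one reduction from bounded-degree vertex cover, which is NP-hard by~\cite{GJ79}. Given an instance $G'(V,E)$ of vertex cover with degree bound $d$, I would build the layered graph $G$ exactly as described above (with $V_1 = V$, $V_2 = E$, the seed $s$, and the sets $W_v$ forming $V_3$), equip it with the step-function model \ICM{C} for $p = \frac{1}{4d}$, and set $S^0 = \{s\}$. The first step is to verify that this construction is polynomial: $G$ has $1 + |V| + |E| + 20d\,|E|$ vertices, which is polynomial in the size of $G'$ since $d$ is bounded, and the graph is plainly computable in polynomial time.

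Next I would justify the ``two coupon'' reduction, namely that an optimal strategy never benefits from a price strictly between $0$ and $1$. The expected direct revenue extracted from a single recommendation at target probability $x$ is $x\cdot C(x)$; for the given step function this product equals $x$ (hence is less than $p$) when $x<p$, and equals $0$ when $x\ge p$. Thus among all priced (nonfree) options the revenue per recommendation is maximized by charging full price, yielding acceptance probability $p$ and unit revenue, so every intermediate price is weakly dominated; the only reason to deviate from full price is to offer the product \emph{free} and thereby propagate influence. Consequently each node is, without loss of generality, either free or full-price, which both explains the name of the problem and licenses the free-versus-full-price dichotomy assumed throughout Lemmas~\ref{lem:v2} and~\ref{lem:v1}.

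With this in hand I would invoke the two lemmas directly. Lemma~\ref{lem:v2} forces an optimal strategy to make every node of $V_2$ free and reachable from $s$ through free nodes, and the preceding domination argument forces all of $V_3$ to be full-price; Lemma~\ref{lem:v1} then identifies the set $C\subseteq V_1$ of free nodes with a minimum vertex cover of $G'$ and pins down the optimal revenue as
\[
  x_C = (|V_1| - |C|)\bigl(1 - (1-p)^{d+1}\bigr) + p\,|V_3|.
\]
Since $1 - (1-p)^{d+1} > 0$, the quantity $x_C$ is strictly decreasing in $|C|$, so maximizing revenue is equivalent to minimizing $|C|$ over vertex covers. In decision form, $G'$ has a vertex cover of size at most $k$ if and only if $G$ admits a pricing strategy with expected revenue at least $(|V_1| - k)\bigl(1 - (1-p)^{d+1}\bigr) + p\,|V_3|$. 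An algorithm solving the two-coupon optimal strategy problem would therefore solve bounded-degree vertex cover, and NP-hardness transfers.

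I expect the only genuinely delicate point to be the domination argument of the second paragraph, since it must simultaneously rule out intermediate prices and confirm that the ``free versus full-price'' restriction assumed in the lemmas is truly without loss of generality. The remaining steps are mechanical bookkeeping: checking that the reduction is polynomial and that the monotone correspondence between optimal revenue and cover size aligns the two decision problems. All of the substantive combinatorial work has already been discharged in Lemmas~\ref{lem:v2} and~\ref{lem:v1}.
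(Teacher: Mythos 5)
Your proposal is correct and follows essentially the same route as the paper: the identical reduction from bounded-degree vertex cover via the layered graph, with Lemmas~\ref{lem:v2} and~\ref{lem:v1} doing the combinatorial work and the free/full-price dichotomy justified by the dominance of price $1$ over intermediate prices (the paper makes this same point parenthetically). Your write-up in fact tightens the paper's concluding step by making the polynomial-size check and the decision-threshold equivalence explicit, and by correctly stating that $x_C$ is \emph{maximized} (not minimized, as the paper's Lemma~\ref{lem:v1} misstates) when $C$ is a minimum vertex cover.
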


\end{document}